\newcommand{\Spec}{\textsf{Spec}\xspace}
\newcommand{\Apte}{\textsf{Apte}\xspace}
\newcommand{\ignore}[1]{}
\newcommand{\statequiv}{\sim}
\def\rightarrowfillstar@{\arrowfill@\relbar\relbar{\rightarrow\smash{^*}}}
\newcommand{\xrightarrowstar}[2][]{\ext@arrow
  0{13}{15}8\rightarrowfillstar@{#1}{#2}}
\newcommand{\lrstep}{\@ifstar{\xrightarrowstar}{\xrightarrow}}
\def\Rightarrowfillstar@{\arrowfill@=={\Rightarrow\smash{^*}}}
\newcommand{\NewxRightarrowstar}[2][]{\ext@arrow
  0{13}{15}8\Rightarrowfillstar@{#1}{#2}}
\newcommand{\NewxRightarrow}[2][]{\ext@arrow
  0{13}{15}8\Rightarrowfill@{#1}{#2}}
\newcommand{\LRstep}{\@ifstar{\NewxRightarrowstar}{\NewxRightarrow}}
\newcommand{\xRightarrow}[2][]{\ext@arrow 0359\Rightarrowfill@{#1}{#2}}
\newcommand{\trip}[3]{(#1;#2;#3)}
\newcommand{\refer}{\mapsto} 
\newcommand{\proc}[2]{(#1;#2)}
\newcommand{\procc}[3]{(#1;#2;#3)}
\newcommand{\dom}{\mathrm{dom}}
\newcommand{\tr}{\mathsf{tr}}
\newcommand{\wfoc}{\varnothing}
\newcommand{\X}{\mathcal{X}}
\newcommand{\W}{\mathcal{W}}
\newcommand{\T}{\mathcal{T}}
\newcommand{\N}{\mathcal{N}}
\newcommand{\q}{\mathcal{Q}}
\newcommand{\p}{\mathcal{P}}
\newcommand{\E}{\mathsf{E}}
\newcommand{\ie}{\emph{i.e.,}\xspace}
\newcommand{\eg}{\emph{e.g.,}\xspace}
\newcommand{\etc}{etc.\xspace}
\newcommand{\rightdoublearrow}{%
  \rightarrow\mkern-10mu\protect\joinrel\rightarrow}
\newcommand{\rightdoublearrowfill@}
  {\arrowfill@\relbar\relbar\rightdoublearrow}
\newcommand{\mapstodoublearrowfill@}
  {\arrowfill@{\mapstochar\relbar}\relbar\rightdoublearrow}
\newcommand{\xrightdoublearrow}[2][]
  {\ext@arrow 3{15}59\rightdoublearrowfill@{#1}{#2}}
\newcommand{\xmapstodoublearrow}[2][]
  {\ext@arrow 3{15}59\mapstodoublearrowfill@{#1}{#2}}
\newcommand{\Dep}{\Dep}
\newcommand{\eqtb}[1]{\equiv_{#1}}
\newcommand{\xMapsto}[2][]{\ext@arrow 0599{\Mapstofill@}{#1}{#2}}
\def\Mapstofill@{\arrowfill@{\Mapstochar\Relbar}\Relbar\Rightarrow}
\newcommand{\lex}{\mathsf{lex}}
\newcommand{\congr}{\rightsquigarrow}
\newcommand{\test}[3]{\mathtt{if}\ #1\ \mathtt{then}\ #2\ \mathtt{else}\ #3}
\newcommand{\testt}[2]{\mathtt{if}\ #1\ \mathtt{then}\ #2}
\newcommand{\Out}{\mathtt{out}}
\newcommand{\In} {\mathtt{in}}
\newcommand{\Ses} {\mathtt{sess}}
\newcommand{\Foc} {\mathtt{foc}}
\newcommand{\Rel} {\mathtt{rel}}
\newcommand{\Par} {~|~}
\newcommand{\OutS}[1]{\mathtt{out}_{#1}}
\newcommand{\InS}[1]{\mathtt{in}_{#1}}
\newcommand{\BangS}[1]{!^{#1}}
\newcommand{\Para}{\mathtt{par}}
\newcommand{\Zero}{\mathtt{zero}}
\newcommand{\defoc}[1]{\lfloor {#1} \rfloor}
\newcommand{\foc}[1]{\lceil {#1} \rceil}
\newcommand{\deco}[1]{\defoc{#1}}   
\newcommand{\ok}{\mathsf{ok}}
\newcommand{\enc}[2]{\mathsf{enc}(#1,#2)}
\newcommand{\dec}[2]{\mathsf{dec}(#1,#2)}
\newcommand{\bc}{\mathsf{bc}}    
\newcommand{\fc}{\mathsf{fc}}    
\newcommand{\inpar}{\mathrel{\text{\textbardbl}}} 
\newcommand{\inseq}{\rightleftharpoons} 
\newcommand{\ordlex}{\prec_\lex}
\let\oldlrstep\lrstep
\renewcommand{\lrstep}[1]{%
  \mathrel{\raisebox{-0.5pt}[5pt]{%
      $\oldlrstep{\raisebox{-1.2pt}[2pt][1pt]{\scalebox{0.7}{$#1$}}}$%
  }}%
}
\newcommand{\lrstepannot}[2]{%
  \mathrel{\raisebox{-0.5pt}[5pt]{%
      $\oldlrstep{\raisebox{-1.2pt}[2pt][1pt]{\scalebox{0.7}{$#1$}}}$%
  \raisebox{-1.5pt}{\scalebox{0.7}{$#2$}}%
  }}%
}
\renewcommand{\lrstepannot}[2]{\lrstep{#1}_{#2}}
\newcommand{\sint}[1]{\lrstep{#1}}                  
\newcommand{\sintc}[1]{\lrstepannot{#1}{c}}         
\newcommand{\sintd}[1]{\lrstepannot{#1}{r}}         
\newcommand{\sinta}[1]{\lrstepannot{#1}{a}}         
\newcommand{\fsint}[1]{\oldlrstep{#1}}                  
\newcommand{\fsintc}[1]{\oldlrstep{#1}_{c}}         
\newcommand{\fsinta}[1]{\oldlrstep{#1}_{a}}         
\newcommand{\eint}{\approx}         
\newcommand{\eintc}{\approx_c}      
\newcommand{\eintd}{\approx_r}      
\newcommand{\einta}{\approx_a}      
\newcommand{\estat}{\sim}           
\newcommand{\vect}[1]{\vv{#1}}
\newcommand{\constrd}[2]{{#2 \triangleright #1}}
\renewenvironment{proof}[1][Proof]{{\noindent\emph{#1.}\;}}{{\qed}\medskip}
\newcommand{\sk}{\mathsf{sk}} 
\newcommand{\skl}{\mathsf{skl}} 
\newcommand{\enab}{\mathsf{enable}}
\newcommand{\ordS}{<}           
\newcommand{\ordSe}{\leq}       
\newcommand{\loc}[2]{[#1]^{#2}} 
\theoremstyle{plain}\newtheorem{proposition}[theorem]{Proposition}
\theoremstyle{plain}\newtheorem{property}[theorem]{Property}
\theoremstyle{plain}\newtheorem*{proposition*}{Proposition}
\theoremstyle{plain}\newtheorem*{theorem*}{Theorem}
\theoremstyle{plain}\newtheorem*{lemma*}{Lemma}
\newcommand{\restatablelemma}[2]{
  \expandafter\newcommand\csname statelemma#1\endcsname{#2}
  \begin{lemma}\label{#1}
    \expandafter\csname statelemma#1\endcsname{}
  \end{lemma}
}
\newcommand{\restatelemma}[1]{
  \begin{lemma*}[\ref{#1}]
    \expandafter\csname statelemma#1\endcsname{}
  \end{lemma*}
}
\newcommand{\restatableproposition}[2]{
  \expandafter\newcommand\csname stateproposition#1\endcsname{#2}
  \begin{proposition}\label{#1}
    \expandafter\csname stateproposition#1\endcsname{}
  \end{proposition}
}
\newcommand{\restateproposition}[1]{
  \begin{proposition*}[\ref{#1}]
    \expandafter\csname stateproposition#1\endcsname{}
  \end{proposition*}
}
\newcommand{\restatabletheorem}[2]{
  \expandafter\newcommand\csname statetheorem#1\endcsname{#2}
  \begin{theorem}\label{#1}
    \expandafter\csname statetheorem#1\endcsname{}
  \end{theorem}
}
\newcommand{\restatetheorem}[1]{
  \begin{theorem*}[\ref{#1}]
    \expandafter\csname statetheorem#1\endcsname{}
  \end{theorem*}
}
\newif\ifnocomment
\newif\ifnotodo
\newif\ifnoproof
\newif\ifnohide
\newif\ifnodel
\newcommand{\squelette}{\sigma}
\title{Partial Order Reduction for Security Protocols}
\author{David Baelde}
\author{St\'ephanie Delaune}
\author{Lucca Hirschi}
\affil{LSV, ENS Cachan \& CNRS\\
\texttt{\{baelde,delaune,hirschi\}@lsv.ens-cachan.fr}
}
\authorrunning{D. Baelde, S. Delaune, and L. Hirschi} 
\begin{document}

\maketitle



\begin{abstract}
  Security protocols are concurrent processes that communicate
  using cryptography with the aim of achieving various security
  properties.
  Recent work on their formal verification
  has brought procedures and tools for deciding trace
  equivalence properties (\emph{e.g.} anonymity, unlinkability, vote
  secrecy) for a bounded number of sessions. 
  However, these procedures are based on a naive symbolic exploration of all traces
  of the considered processes which, unsurprisingly, greatly limits
  the scalability and practical impact of the verification tools.
 
  In this paper, we overcome this difficulty by developing partial order
  reduction techniques for the verification of security protocols.
  We provide reduced transition systems that
  optimally eliminate redundant traces, and which are adequate for
  model-checking trace equivalence properties of protocols by means
  of symbolic execution.
  We have implemented our reductions in the tool \Apte, and demonstrated
  that it achieves the expected speedup on various protocols.
\end{abstract}

\section{Introduction}

\newcommand{\myparagraph}[1]{\noindent\emph{#1}~}

Security protocols are concurrent processes that use
various cryptographic primitives in order to achieve security properties
such as secrecy, authentication, anonymity, unlinkability, \emph{etc.}
They involve a high level of concurrency and are difficult to analyse by hand. 
Actually, 
many protocols have been shown to be flawed several years after their
publication (and deployment). 
This has lead to a flurry of research on formal
verification of protocols.


A successful way of representing protocols is to use variants
of the $\pi$-calculus, whose labelled transition systems naturally express
how a protocol may interact with a (potentially malicious) environment
whose knowledge increases as more messages are exchanged over the
network.
Some security properties (\eg\ secrecy, authentication) are then described
as reachability properties, while others (\eg\ unlinkability, anonymity)
are expressed as trace equivalence properties. In order to decide such
properties, a reasonable assumption is to bound the number of protocol
sessions, thereby limiting the length of execution traces.
Even under this assumption, infinitely many traces remain, since
each input may be fed infinitely many different messages.
However, symbolic execution and dedicated constraint solving procedures
have been devised to provide decision procedures for
reachability~\cite{MS02,CCLZ-TOCL} and,
more recently, 
equivalence
properties~\cite{Tiu-csf10,cheval-ccs2011}.
Unfortunately, the resulting 
tools, especially those  for checking equivalence
(\eg \Apte~\cite{Cheval-tacas14}, \Spec~\cite{SPEC}),
have a very limited practical impact because they scale very
badly. 
This is not surprising since they treat concurrency in a very naive way,
exploring all possible symbolic interleavings of concurrent actions.



\myparagraph{Contributions.}
We develop partial order reduction (POR)
techniques~\cite{Peled98,ModelCheckingBook,godefroid1996partial}
for trace equivalence checking of security protocols.
Our main challenge is to
do it in a way that is compatible with symbolic execution:
we should provide a reduction that is effective when messages remain
unknown, but leverages information about messages when it is inferred
by the constraint solver.
We achieve this by refining interleaving semantics 
in two steps, gradually eliminating redundant traces.
The first refinement, called \emph{compression}, uses the notion of
polarity~\cite{Andreoli92} to impose a simple strategy on traces.
It does not rely on data analysis at all and can easily be used as a
replacement for the usual semantics in verification algorithms.
The second one, called \emph{reduction}, takes data into account and
achieves optimality in eliminating redundant traces.
In practice, the reduction step can be implemented in an approximated
fashion, through an extension of constraint resolution procedures.
We have done so in the tool \Apte,
showing that our theoretical results do translate to significant
practical optimisations.

\myparagraph{Outline.}
We consider in Section~\ref{sec:model} a rich process algebra for
representing security protocols.
It supports arbitrary cryptographic primitives,
and even includes a replication operator
suitable for modelling unbounded numbers of sessions.
Thus, we do not restrict to a particular fragment
for which a decision procedure exists,
but show the full scope of our theoretical results.
We give in Section~\ref{sec:prelim} an \emph{annotated} semantics that
will facilitate the next technical developments. We then define our
\emph{compressed} semantics in Section~\ref{sec:compression} and the
\emph{reduced} semantics in Section~\ref{sec:reduction}. In both sections,
we first restrict the transition system, then show that
the restriction is adequate for checking trace equivalence under
some action-determinism condition.
We finally discuss how these results can be lifted to the symbolic
setting in Section~\ref{sec:application}. Specifically, we describe
how we have implemented our techniques in \Apte, and we present experimental
results showing that the optimisations are fully effective in practice.
We discuss related work in Section~\ref{sec:relWork}, and conclude
in Section~\ref{sec:conclusion}.
Complete proofs are given in appendices.

\section{Model for security protocols}
\label{sec:def}
\label{sec:model}

\newcommand{\Ch}{\mathcal{C}}
\newcommand{\Bang}[1]{\textbf{\large !} \nu #1}
\newcommand{\bang}[3]{\mathbf{\large !}^{#1}_{#2} #3}
\newcommand{\obs}{\mathsf{obs}}       
\newcommand{\subst}[2]{\{{#1}/{#2}\}}


In this section we introduce our process algebra, which is a variant
of the applied $\pi$-calculus~\cite{AbadiFournet2001} that has been 
designed with the aim of modelling cryptographic protocols. 
Processes can exchange complex messages, represented by terms
quotiented by some equational theory.

One of
the key difficulties in the applied $\pi$-calculus is to model
the knowledge of the environment, seen as an 
attacker who listens to network communication and may also inject
messages.
One has to make a distinction between the content of a message (sent by the
environment) and the way the message has been created (from knowledge 
available to the environment). 
While the distinction between messages and recipes
came from security applications, it is naturally of much broader interest,
as it gives a precise, intentional content to labelled transitions that we
exploit to analyse data dependencies.

We study a process algebra that may seem quite restrictive:
we forbid internal communication and private channels.
This is however reasonable when studying security protocols faced with the usual
omnipotent attacker.
 In such a setting, we end up considering the worst-case scenario
where any communication 
has to be made via the
 environment.

\subsection{Syntax}

We assume a number of disjoint and infinite sets:
a set $\Ch$ of \emph{channels}, whose elements are denoted by $a$, $b$, $c$;
a set $\N$ of \emph{private names} or \emph{nonces}, denoted by $n$ or $k$;
a set $\X$ of \emph{variables}, denoted by $x$, $y$, $z$ as usual;
and a set $\W$ of \emph{handles}, denoted by $w$ and
used for referring to previously output terms.
Next, we consider a signature $\Sigma$ consisting of a finite set of 
function symbols together with their arity.
Terms over~$S$, written $\T(S)$, are inductively generated from
$S$ and function symbols from $\Sigma$.
When $S \subseteq \N$, elements of $\T(S)$ are called \emph{messages}.
When $S \subseteq \W$, they are called \emph{recipes} and 
written $M$, $N$. Intuitively, recipes express how a message has been
derived by the environment from the messages obtained so far.
Finally, 
we consider an equational theory $\E$ over terms
to assign a 
meaning to function symbols in $\Sigma$. 

\begin{example}
\label{ex:signature}
Let $\Sigma = \{\mathsf{enc}/2,\mathsf{dec}/2, \mathsf{h}/1\}$ and $\E$
be the equational theory induced by the equation  $\dec{\enc{x}{y}}{y} =
x$. Intuitively, the symbols $\mathsf{enc}$ and $\mathsf{dec}$
represent symmetric encryption and decryption, whereas $\mathsf{h}$ is
used to model a hash function.
Now, assume that the environment knows the key~$k$ as well as 
the ciphertext $\enc{n}{k}$,
and that
these two messages
are referred to by handles $w$ and~$w'$.
The environment may decrypt the ciphertext 
with the key~$k$, 
apply the hash function, and  encrypt the
result using~$k$ to get the message $m_0 = \enc{\mathsf{h}(n)}{k}$.  
This computation is modelled using the
\emph{recipe} $M_0 = \enc{\mathsf{h}(\dec{w'}{w})}{w}$.
\end{example}


\begin{definition}
  Processes are defined by the following syntax
where $c,a \in \Ch$, $x \in \X$, $u,v \in \T(\N \cup \X)$, and
$\vect c$ (resp. $\vect{n}$) is a sequence of channels from $\Ch$
(resp. names from~$\N$).
\[
 P,Q  \;::=\;
 0 \;\mid\; (P\Par Q) \;\mid\;
 \In(c,x).P \;\mid\; \Out(c,u).P \;\mid\;
 \test{u=v}{P}{Q} \;\mid\;
 \bang{a}{\vect c, \vect n}{P}
\]
\end{definition}

The last construct combines replication with channel and name restriction:
$\bang{a}{\vect c, \vect n}{P}$ may be read as
$!(\nu \vect{c}.\Out(a,\vect{c}). \nu \vect{n}. P)$
in standard applied $\pi$-calculus. Our goal with this compound construct is 
to support replication in a way that is not fundamentally incompatible
with the action-determinism condition which we eventually impose on our processes.
This is achieved here by advertising on the public channel $a$ any new copy of
the replicated process. At the same time, we make public the new channels
$\vect c$ on which the copy may operate --- but not the new names $\vect n$.
While it may seem restrictive, this style is actually natural
for security protocols where the attacker knows exactly to whom
he is sending a message and from whom he is receiving, \eg via IP addresses.
%

We shall only consider \emph{ground} processes, where
each variable is bound by an input.
We denote by $\fc(P)$ and $\bc(P)$ the set of \emph{free} and \emph{bound
channels} of~$P$.

\newcommand{\h}{\mathsf{h}}
\begin{example}
\label{ex:process}
The process $P_0$ models an agent who sends the ciphertext
$\mathsf{enc}(n,k)$, and then waits for an input on~$c$. In case the
input has the expected form, 
the constant $\mathsf{ok}$ is emitted.\\[2mm]
\null\hfill $P_0=\Out(c,\enc{n}{k}).\In(c,x).\test{\dec{x}{k} =
  \h(n)}{\Out(c,\mathsf{ok}).0}{0}$ \hfill\null

\smallskip{}

\noindent
The processes $P_0$ as well as $\bang{a}{c,n}P_0$  are  ground.
We have that  $\fc(P_0)=\{c\}$ and $\bc(P_0)=\emptyset$ whereas
$\fc(\bang{a}{c,n}P_0) = \{a\}$ and $\bc(\bang{a}{c,n}P_0) = \{c\}$.
\end{example} 

\subsection{Semantics}
\label{subsec:semantics}
We only consider processes that are \emph{normal}
  w.r.t. \emph{internal reduction} $\congr$ defined as follows:
\[
\begin{array}{c}
\left.
\begin{array}{ccc}
\test{u=v}{P}{Q} \congr P \mbox{ \it when } u =_\E
v  &\;\;\;\;&P \Par Q \, \congr\, P' \Par Q\\
\test{u=v}{P}{Q}\congr Q \mbox{ \it when } u\neq_\E v  & &Q \Par P \,\congr\, Q \Par P'
\end{array}
\right\}
\begin{array}{l} \hspace{-0.1cm} {\footnotesize{ \it when\, } P \hspace{-0.1cm}\congr
    \hspace{-0.1cm}  P' } \\
\end{array}
\\[2mm]
\begin{array}{ccccc}
(P_1 \Par P_2) \Par P_3 \,\congr\, P_1\Par
  (P_2 \Par P_3)& \;\;\;\;&   P\Par 0 \congr P & \;\;\;\;&0\Par P \,\congr\, P
\end{array}
\end{array}
\]
  Any process in normal form built from parallel composition can be
  uniquely written as $P_1 \Par (P_2 \Par (\ldots \Par P_n))$
  with $n\geq 2$, which we note $\Pi_{i=1}^n P_i$,
  where each process~$P_i$ is neither a parallel composition nor the process
  $0$.



  We now define our labelled transition system.
  It deals with \emph{configurations} (denoted by~$A$, $B$) which
  are pairs $(\p;\Phi)$ where
    $\p$ is a multiset of ground processes and
    $\Phi$, called the \emph{frame}, is a substitution mapping
      handles to messages that
      have been made available to the environment.
 Given a configuration
  $A$, $\Phi(A)$ denotes its second component.
  Given a frame~$\Phi$, $\dom(\Phi)$ denotes its domain.
\[
  \begin{array}{lcl}
    \mbox{\sc In} &&
    \proc{\{\In(c,x).Q\}\uplus\p}{\Phi} 
    \lrstep{\In(c,M)} 
    \proc{\{Q\subst{M\Phi}{x}\}\uplus\p}{\Phi}  \;\;\;\;\;\;\hfill{\mbox{$M \in \T(\dom(\Phi))$}}
    \\[2mm]
    \mbox{\sc Out} &&
    \proc{\{\Out(c,u).Q\}\uplus\p}{\Phi}
    \lrstep{\Out (c,w)}  \proc{\{Q\}\uplus\p}{\Phi\cup\{w\refer
      u\}} 
    \hfill   \;\;\;\;    {\mbox{$w\in\W$ fresh}}\\[2mm]
    \mbox{\sc Repl} &&
    \proc{\{\bang{a}{\vect{c}, \vect{n}}{P}\} \uplus \p}{\Phi} 
    \lrstep{{\Ses(a,\vect{c})}} 
    \proc{\{P; \bang{a}{\vect c, \vect n}{P}\}\uplus\p}{\Phi} \hfill
    {\mbox{$\vect{c}, \vect{n}$  fresh}}\\[2mm]
    \mbox{\sc Par} &&
    \proc{\{\Pi_{i=1}^n P_i\}\uplus\p}{\Phi}
    \lrstep{\tau} 
    \proc{\{P_1,\ldots,P_n\}\uplus\p}{\Phi}\\[2mm]
    \mbox{\sc Zero} & &
    \proc{\{0\}\uplus\p}{\Phi}
    \lrstep{\tau} \proc{\p}{\Phi}
  \end{array}
  \]

  Rule \textsc{In} expresses that an input process may receive any message that
  the environment can derive from the current frame. In rule \textsc{Out}, the
  frame is enriched with a new message. 
The last two rules simply translate the parallel structure of processes
  into the multiset structure of the configuration.
As explained above, rule \textsc{Repl}
  combines the replication of a process together with the creation of
  new channels and nonces.
The channels~$\vect c$ are implicitly made public, but the newly created
names $\vect n$ remain private.
Remark that channels $\vect c$ and names $\vect n$ must be
fresh, \ie they do not appear free in the original configuration.
As usual, freshness conditions do not block executions:
it is always
possible to rename bound channels~$\vect c$ and names $\vect n$
of a process $\bang{a}{\vect c,\vect n}{P}$ before applying {\sc Repl}.
We denote by $\bc(\tr)$ the bound channels of a trace $\tr$, \ie all
the channels that occur in second argument of an action $\Ses(a,\vect
c)$ in $\tr$,
and we consider traces
where channels are bound at most once.

\begin{example}
\label{ex:semantics}
Going back to Example~\ref{ex:process} with
$\Phi_0 = \{w_1 \mapsto k\}$,
we have that:\\[2mm]
\null\hfill
$\proc{\{\bang{a}{c,n}P_0\}}{\Phi_0} \sint{\Ses(a,c)} 
\lrstep{\Out(c,w_2)} 
\lrstep{\In(c,M_0)}
\proc{\{\Out(c,\ok).0;\bang{a}{c,n}P_0\}}{\Phi}$\hfill\null
\\[2mm]
where 
$\Phi = \{w_1 \mapsto k, w_2 \mapsto \enc{n}{k}\}$ and  $M_0 =
\enc{\h(\dec{w_2}{w_1})}{w_1}$.
\end{example}

 \subsection{Equivalences}

We are concerned with trace equivalence,
which is 
used~\cite{kostas-csf10,DKR-lncs6000}
to model anonymity, untraceability, strong secrecy, \etc
Finer behavioural equivalences, \eg weak bisimulation, appear to be too strong
with respect to what an attacker can really observe.
%
Intuitively, two configurations are trace equivalent if the attacker 
cannot tell whether he is interacting with one or the other.
To make this formal, we  introduce a notion of equivalence 
between  frames. 

\begin{definition}
Two frames $\Phi$ and $\Phi'$ are in \emph{static equivalence},
written $\Phi
\statequiv \Phi'$, when $\dom(\Phi) = \dom(\Phi')$,
and:
$M\Phi =_\E N\Phi \;\Leftrightarrow \; M\Phi' =_\E N\Phi'
\mbox{ for any terms $M, N \in \T(\dom(\Phi))$}.
$
\end{definition}

\begin{example}
\label{ex:static}
Continuing Example~\ref{ex:semantics}, consider $\Phi' = \{w_1
\mapsto k', w_2 \mapsto \enc{n}{k}\}$. 
The test ${\enc{\dec{w_2}{w_1}}{w_1} = w_2}$
is true in $\Phi$ but not in~$\Phi'$, thus $\Phi
\not\statequiv \Phi'$. 
\end{example}

\noindent We then define $\obs(\tr)$ to be the subsequence of $\tr$
obtained 
by erasing $\tau$ actions.

\begin{definition}
  Let $A$ and $B$ be two configurations. We say that $A \sqsubseteq B$ when,
  for any $A \lrstep{\tr} A'$ such that $\bc(\tr)\cap\fc(B)=\emptyset$,
  there exists $B \lrstep{\tr'} B'$ such
  that $\obs(\tr) = \obs(\tr')$ and ${\Phi(A') \statequiv \Phi(B')}$.
  They are \emph{trace equivalent}, written $A \approx B$, 
  when $A \sqsubseteq B$ and $B \sqsubseteq A$.
\end{definition}

In order to lift our optimised semantics to trace equivalence,
we will require configurations to be {\em action-deterministic}.
This common assumption in POR techniques~\cite{ModelCheckingBook}
is also reasonable in the context of security protocols, where the 
attacker knows with whom he is communicating.


\begin{definition}
A configuration $A$ is {\em action-deterministic} if whenever $A \lrstep{\tr}
\proc{\p}{\Phi}$, and 
$P,Q$ are two elements of~$\p$, we have that 
$P$ and $Q$ cannot perform an observable action of the same nature
($\In$, $\Out$, or $\Ses$) on the same channel (\ie if both actions
are of same nature, their first argument has to differ).
\label{def:action-det}
\end{definition}

\section{Annotated semantics}
\label{sec:prelim}


We shall now define an intermediate semantics whose transitions are equipped 
with more informative actions. The annotated actions will notably feature 
\emph{labels} $\ell \in \mathbb{N}^{*}$ indicating from which concurrent processes
they originate.
A {\em labelled action} will be written $\loc{\alpha}{\ell}$ where $\alpha$
is an action and $\ell$ is a label.
Similarly, a {\em labelled process} will be written $\loc{P}{\ell}$.
When reasoning about trace equivalence between two configurations, it will be 
crucial to maintain a consistent labelling between configurations along the execution.
In order to do so, we 
define \emph{skeletons of observable actions},
which are of the form
$\InS c$, $\OutS c$ or $\BangS a$ where $a,c\in\Ch$,
and we 
assume a total ordering over  those skeletons,
denoted $\ordS$ with $\ordSe$ its reflexive closure.
Any process that is neither $0$ nor a parallel composition
induces a skeleton corresponding to its toplevel connective,
and we denote it by $\sk(P)$.

We define in Figure~\ref{fig:sem-ann} the {\em annotated semantics}
$\sinta{}$ over configurations whose processes are labelled.
In \textsc{Par}, note that $\sk(P_i)$ is well defined as $P_i$ cannot be a zero
or a parallel composition.
Also note
that the label of an action is always that of the active process
in that transition.
More importantly,
the annotated transition system does not restrict the executions
of a process but simply annotates them with labels, and replaces
$\tau$ actions by more descriptive actions.

\begin{figure}[tpb]
{  \centering
$
 \begin{array}{l}
    \mbox{\sc In\slash Out} \;\;
 \infer[\alpha\in\{\In(\_,\_);\Out(\_,\_)\}]
    {\proc{\{\loc{P}{\ell}\}\uplus\p}{\Phi}
      \fsinta{\;\loc{\alpha}{\ell}\;}
      \proc{\{\loc{P'}{\ell}\}\uplus\p}{\Phi'}}
    {\proc{\{P\}\uplus\p}{\Phi}\fsint{\;\alpha\;}
      \proc{\{P'\}\uplus\p}{\Phi'}}
    \\ [3mm]
    \mbox{\sc Repl} \;\;
    \proc{\{\loc{\bang{a}{\vect{c}, \vect{n}}{P_0}}{\ell}\} \uplus \p}{\Phi} 
    \fsinta{\loc{\Ses(a,\vect{c})}{\ell}}
    \proc{\{\loc{P_0}{\ell \cdot 1},
      \loc{\bang{a}{\vect c, \vect n}{P_0}}{\ell \cdot 2}
      \}\uplus\p}{\Phi} \;\;
 \mbox{$\vect c,\vect n$  fresh}\\[3mm]
    \mbox{\sc Par} \;\;
    \proc{\{\loc{\Pi_{i=1}^n P_i}{\ell}\}\uplus\p}{\Phi}
    \fsinta{\loc{\Para(\squelette_{\pi(1)};\ldots;\squelette_{\pi(n)})}{\ell}}
    \proc{\{\loc{P_{\pi(1)}}{\ell \cdot
        1},\ldots,\loc{P_{\pi(n)}}{\ell \cdot n}\}\uplus\p}{\Phi}\\[1mm]
    \multicolumn{1}{r}{$\mbox{$\squelette_i = \sk(P_i)$ and
$\pi$ is a permutation
      over $[1,...n]$ such that
      $\squelette_{\pi(1)} \leq \ldots \leq \squelette_{\pi(n)}$}$}\\[3mm]
    \mbox{\sc Zero}  \;\;
 \hspace{1cm}
    \proc{\{\loc{0}{\ell}\}\uplus\p}{\Phi}
    \fsinta{\loc{\Zero}{\ell}} \proc{\p}{\Phi}
  \end{array}
$

}
\caption{Annotated semantics}
\label{fig:sem-ann}
\end{figure}

We now define how to extract sequential dependencies from labels,
which will allow us to analyse concurrency in a trace without
referring to configurations.
\begin{definition}
Two labels are \emph{dependent} if one is a prefix of the other.
We say that the labelled actions $\alpha$ and $\beta$
are \emph{sequentially dependent} when their labels are dependent,
and \emph{recipe dependent} when
$\{\alpha, \beta\} = \{\loc{\In(c,M)}{\ell},\loc{\Out(c',w)}{\ell'}\}$
with $w$ occurring in $M$.
They are \emph{dependent} when they are either sequentially or recipe
dependent. Otherwise, they are \emph{independent}.
\end{definition}
\begin{definition}
  A configuration $\proc{\p}{\Phi}$ is \emph{well labelled} if
  $\p$ is a multiset of labelled processes such that
  two elements of $\p$ have independent labels.
\label{def:well-label}
\end{definition}

Obviously, any unlabelled configuration may be well labelled.
Further, it is easy to see that well labelling is preserved by $\sinta{}$.
Thus, we shall implicitly assume to be working with well labelled
configurations in the rest of the paper. Under this assumption,
we obtain the following fundamental lemma.
\restatablelemma{lem:perm}{
  Let $A$ be a (well labelled) configuration,
  $\alpha$ and $\beta$ be two independent labelled actions.
  We have
  $A \sinta{\alpha . \beta} A'$
  if, and only if,
  $A \sinta{\beta . \alpha} A'$.
}

\paragraph{Symmetries of trace equivalence.}
We will see that, when checking $A \eint B$ for action-deterministic
configurations, it is sound to require that $B$ can perform all traces of $A$
in the annotated semantics (and the converse). In other words, labels and detailed non-observable 
actions $\Zero$ and $\Para(\sigma_1 \ldots \sigma_n)$ are actually relevant for trace equivalence.
Obviously, this can only hold if $A$ and $B$ are labelled consistently.
In order to express this, we extend $\sk(P)$ to parallel and zero processes:
we let their skeletons be the associated action in the annotated semantics.
Next, we define the labelled skeletons by
$\skl(\loc{P}{\ell}) = \loc{\sk(P)}{\ell}$.
When checking for equivalence of $A$ and $B$, we shall assume that
$\skl(A)=\skl(B)$, \ie the configurations have the same set of labelled
skeletons. This technical condition is obviously not restrictive in practice.

\begin{example}
\label{ex:out-label}
Let $A = 
\proc{\{\loc{\In(a,x).((\Out(b,m).P_1)\Par P_2)}{0}\}}{\Phi}$ 
with $P_1= \In(c,y).0$ and $P_2=\In(d,z).0$,
and $B$ the configuration obtained from $A$ by swapping $P_1$ and $P_2$.
We have $\skl(A)=\skl(B)=\{\loc{\InS a}{0}\}$.
Consider the following trace:\\[2mm]
\null\hfill
$\tr=\loc{\In(a,\mathtt{ok})}{0}.\loc{\Para(\{\OutS b;\InS d\})}{0}.\loc{\Out(b,w)}{0\cdot1}.
\loc{\In(c,w)}{0\cdot1}.\loc{\In(d,w)}{0\cdot2}$\hfill\null
\\[2mm]
Assuming $\OutS b \ordS\InS d$ and $\mathsf{ok}\in\Sigma$, we
have $A\sinta{\tr}A'$. However, there is no~$B'$
such that $B\sinta{\tr}B'$, for two reasons. First,
$B$ cannot perform the second action
since skeletons of sub-processes of its parallel composition are $\{\OutS b;\InS c\}$.
Second, even if we ignored that mismatch on a non-observable action,
$B$ would not be able to perform the action $\In(c,w)$ with the right label.
Such mismatches can actually be systematically used to show $A \not\eint B$,
as shown next.
\end{example}

\restatablelemma{lem:strong-symmetry}{
Let $A$ and $B$ be two action-deterministic configurations such that
${A \eint B}$ and $\skl(A) = \skl(B)$.
For any execution\\[1mm]
\null\hfill
$A \fsinta{\loc{\alpha_1}{\ell_1}} A_1
\fsinta{\loc{\alpha_2}{\ell_2}}  \ldots
\fsinta{\loc{\alpha_n}{\ell_n}} A_n$
\hfill\null\\[1mm] 
\noindent
with $\bc(\alpha_1.\ldots\alpha_n)\cap\fc(B)=\emptyset$,
there exists an execution\\[1mm]
\null\hfill
$B \fsinta{\loc{\alpha_1}{\ell_1}} B_1
\fsinta{\loc{\alpha_2}{\ell_2}}  \ldots
\fsinta{\loc{\alpha_n}{\ell_n}} B_n$
\hfill\null\smallskip \\[1mm]
\noindent
such that $\Phi(A_i) \estat
\Phi(B_i)$ and $\skl(A_i) =
\skl(B_i)$ for any $1 \leq i \leq n$.
}



\section{Compression}
\label{sec:compression}

Our first refinement of the semantics,
which we call compression, is closely related to
focusing from proof theory~\cite{Andreoli92}: we will assign a polarity to
processes and constrain the shape of executed traces based on those 
polarities. This will provide a first significant reduction
of the number of traces to consider when checking reachability-based 
properties such as secrecy, and more importantly, equivalence-based properties in the
action-deterministic case.

\begin{definition}
A  process $P$ is \emph{positive} if it is of the form
$\In(c,x).Q$, and it is \emph{negative} otherwise.
A multiset of processes~$\p$ is \emph{initial} if it contains only positive
or \emph{replicated} processes,
\ie of the form $\bang{a}{\vect{c}, \vect n}{Q}$.
\end{definition}

\begin{figure}[h]
\[\begin{array}{lcl}
    \mbox{\sc Start\slash In}&  & {\infer[]
    {\trip{\mathcal{P}\uplus{\{P\}}}{\wfoc}{\Phi}\fsintc{\;\Foc(\In(c,M))\;}
      \trip{\mathcal{P}}{P'}{\Phi}}
    {   \p\text{ is initial}
      & \proc{P}{\Phi}\fsinta{\In(c,M)} \proc{P'}{\Phi}
    }}
    \\ [1mm]
    \mbox{\sc Start\slash !} &&  \infer
    {\trip{\mathcal{P}\uplus{\{\bang{a}{\vect c, \vect n}{P}\}}}{\wfoc}{\Phi}\fsintc{\;\Foc(\Ses(a,\vect {c}))\;} 
      \trip{\mathcal{P}\uplus{\{\bang{a}{\vect c, \vect n}{P}\}}}
      {Q}{\Phi}}
    {    \p\text{ is initial}
      & \proc{\bang{a}{\vect c,\vect
          n}{P}}{\Phi}\fsinta{\Ses(a,\vect{c})} \proc{\{ \bang{a}{\vect c,\vect
          n}{P};Q\}}{\Phi} 
    }
    \\ [1mm]
    \mbox{\sc Pos\slash In} &&  \infer[]
    {\trip{\mathcal{P}}{P}{\Phi}\fsintc{\;\In(c,M)\;}
      \trip{\mathcal{P}}{P'}{\Phi}}
      {   
      \proc{P}{\Phi}\fsinta{\In(c,M)} \proc{P'}{\Phi}
    }
    \\ [1mm]
    \mbox{{\sc Neg}} && 
  \infer[\alpha\in\{\Para(\_),\Zero,\Out(\_,\_)\}]
    {\trip{\mathcal{P}\uplus{\{P\}}}{\wfoc}{\Phi}\fsintc{\;\alpha\;}
      \trip{\mathcal{P}\uplus{\mathcal{P}'}}{\wfoc}{\Phi'}}
    { 
      \proc{P}{\Phi}\fsinta{\alpha} \proc{\mathcal{P}'}{\Phi'}
    }
    \\ [1mm]
    \mbox{\sc Release} &&  
    {\trip{\mathcal{P}}{\loc{P}{\ell}}{\Phi}\fsintc{\;\loc{\Rel}{\ell}\;} 
      \trip{\mathcal{P}\uplus{\{\loc{P}{\ell}\}}}{\wfoc}{\Phi}}
    \;\;\mbox{ when  $P$\text{ is negative }
    }
  \end{array}\]
  Labels are implicitly set in the same way as in the annotated semantics.
  {\sc Neg} is made non-branching by imposing an arbitrary
  order on labelled skeletons of available actions.
  \caption{Compressed semantics}
\label{fig:comp-sem:one}
\end{figure}


The compressed semantics (see Figure~\ref{fig:comp-sem:one}) is built upon the annotated semantics.  
It constrains the traces to
follow a particular strategy, alternating between \emph{negative} and
\emph{positive} phases.
It uses enriched configurations of the form
$\trip{\mathcal{P}}{F}{\Phi}$ where $\proc{\mathcal{P}}{\Phi}$ is
a labelled configuration and
$F$ is either a  process (signalling which process is {\em under focus}
 in the positive phase) or~$\wfoc$ (in the negative phase).
The negative phase lasts until the configuration is initial
(\ie unfocused with an initial underlying multiset of processes) 
and in that phase we perform actions that decompose negative non-replicated
processes. This is done using the {\sc Neg} rule, in a completely 
deterministic way.
When the configuration becomes
initial, a positive phase starts:
we choose one process and start executing the actions of that 
process (only inputs, possibly preceded by a new session) without the 
ability to switch to another process of the multiset,
until a negative subprocess is released and we go back to the negative phase.
The active process in the positive phase is said to be \emph{under focus}.
Between any two initial configurations, the compressed semantics
executes a sequence of actions, called \emph{blocks},
of the form
\;$\Foc(\alpha).\tr^+.\Rel.\tr^-$\;
where~$\tr^+$ is a (possibly empty) 
sequence of input actions, whereas $\tr^-$ is a (possibly empty) sequence
of $\Out$, $\Para$, and $\Zero$ actions. 
Note that, except for the choice of recipes,
the compressed semantics is completely non-branching when executing a block.
It may branch only when choosing which block to execute.

\subsection{Reachability}

We now formalise the relationship between traces of the compressed and
annotated semantics. In order to do so,
we translate between configuration and enriched configuration as follows:
\\[2mm]
\null\hfill $\foc{\proc{\mathcal{P}}{\Phi}} =
\trip{\mathcal{P}}{\wfoc}{\Phi}$, \;
$\defoc{\trip{\mathcal{P}}{\wfoc}{\Phi}} = \proc{\mathcal{P}}{\Phi}$ \;
and
$\defoc{\trip{\mathcal{P}}{P}{\Phi}} =
\proc{\mathcal{P}\uplus{\{P}\}}{\Phi}$. \hfill\null

\smallskip{}

Similarly, we map compressed traces to annotated ones:
 \\[2mm]
\null\hfill
 $\defoc{\epsilon}=\epsilon$, \;
 $\defoc{\Foc(\alpha).\tr} = \alpha.\defoc{\tr}$, \;
 $\defoc{\Rel.\tr} = \defoc{\tr}$ \; and
 $\defoc{\alpha.\tr}=\alpha.\defoc \tr$ otherwise.
 \hfill\null

\smallskip{}

We observe  that we can map any execution in the compressed
semantics to an execution in the annotated semantics.
Indeed,
a compressed execution is simply an annotated execution with some
annotations (\ie $\Foc$ and $\Rel$) indicating the start of a
positive/negative phase.

\restatablelemma{lem:reach-sound}{
For any configurations
$A$,  $A'$ and $\tr$,
$A \sintc{\tr} A'$ implies $\defoc{A} \fsinta{\defoc{\tr}} \defoc{A'}$.
}

Going in the opposite direction is more involved.
In general, mapping annotated executions to compressed ones requires
to reorder actions.
Compressed executions also force negative actions to be performed
unconditionally,
which we compensate by considering
{\em complete} executions of a configuration, \ie executions after which no more
action can be performed except possibly the ones that consist in
unfolding a
replication (\ie rule {\sc Repl}). Inspired by the positive
trunk argument of~\cite{saurin}, we show the following lemma. 

\restatablelemma{thm:reach-comp}{
Let $A$, $A'$ be two configurations and $\tr$ be such that $A\sinta{\;\tr\;} A'$ is
complete.
There exists a trace~$\tr_c$, such that
$\defoc{\tr_c}$ can be obtained from $\tr$ by swapping
independent labelled actions, and
$\foc{A}\sintc{\tr_c}\foc{A'}$.
}

\begin{proof}[Proof sketch]
  We proceed by induction on the length of a complete execution
  starting from~$A$.
  If $A$ is not initial, then we need to execute some negative
  action using {\sc Neg}: this action must be present
  somewhere in the complete execution, and we can permute it
  with preceding actions using Lemma~\ref{lem:perm}.
  If $A$ is initial, we analyse the prefix of input and session actions
  and  we  extract a subsequence of that prefix that
  corresponds to a full positive phase.
\end{proof}



\subsection{Equivalence}
\label{sec:comp-equ}
We now define compressed trace equivalence ($\eintc$) and
prove that it coincides with $\approx$. 

\begin{definition}
Let~$A$ and~$B$ be two configurations. We say that $A \sqsubseteq_c B$
when, for any
$A \sintc{\tr} A'$ such that $\bc(\tr)\cap\fc(B)=\varnothing$, 
there exists $B\sintc{\tr} B'$ such that $\Phi(A')\statequiv \Phi(B')$.
They are \emph{compressed trace equivalent}, denoted $A \eintc B$, if 
$A \sqsubseteq_c B$ and $B \sqsubseteq_c A$.
\label{def:eintc}
\end{definition}

Compressed trace equivalence can be more efficiently checked than regular
trace equivalence. Obviously, it explores less interleavings by relying on
$\sintc{}$ rather than $\sint{}$. It also requires that traces of one process
can be played exactly by the other, including details such as non-observable
actions, labels, and focusing annotations.
The subtleties shown in Example~\ref{ex:out-label} are crucial for the 
completeness of compressed equivalence w.r.t. regular equivalence.
Since the compressed semantics forces
to perform available outputs before \emph{e.g.} input actions, some non-equivalences
are only detected thanks to the labels and detailed non-observable actions
of our annotated semantics.

\restatabletheorem{thm:eintc}{
  Let $A$ and $B$ be two  action-deterministic configurations with 
  $\skl(A)=\skl(B)$.
  We have {$A \eint B$} if, and only if, $\foc{A} \eintc \foc{B}$.
}




\begin{proof}[Proof sketch]
  ($\Rightarrow$)
  Consider an execution $\foc{A} \sintc{\tr} A'$.  
  Using Lemma~\ref{lem:reach-sound}, we get  
  ${A} \sinta{\defoc{\tr}} \defoc{A'}$. Then, Lemma~\ref{lem:strong-symmetry} yields
  $B\sinta{\defoc{\tr}}B'$ for some $B'$ such that
  $\Phi(\defoc{A'})\estat\Phi(B')$ and labelled skeletons are equal all along
  the executions. Relying on those skeletons, we show that positive/negative
  phases are  synchronised, and thus 
  $\foc{B}\sintc{\tr}B''$
  for some $B''$
  with $\defoc{B''}=B'$. 
  \\  ($\Leftarrow$)
  Consider an execution $A \sinta{\tr} A'$.  
  We first observe that it suffices to consider only complete executions 
  there. This allows us to get 
a compressed 
  execution $\foc{A} \sintc{\tr_c} \foc{A'}$
  by Lemma~\ref{thm:reach-comp}.
  Since $\foc{A} \eintc \foc{B}$, there
  exists $B'$ such that $\foc{B} \sintc{\tr_c} B'$ with $\Phi(\foc{A'})\estat\Phi(B')$. 
  Thus we have $B \sinta{\deco{\tr_c}} \defoc{B'}$ but also $B \sinta{\tr} \defoc{B'}$
  thanks to Lemma~\ref{lem:perm}.
\end{proof}


\label{rmk:compimpropre}
\myparagraph{Improper blocks.}
Note that blocks of the form $\Foc(\alpha).\tr^+.\Rel.\Zero$ do not bring any new
information to the attacker.
While it would be incorrect to fully ignore such \emph{improper} blocks,
we can show that it is sufficient to consider them
at the end of traces.
We can thus consider a further optimised compressed trace equivalence that
only checks for {\em proper traces},
\ie ones that have at most one improper block and only at the end of trace.
We have also shown that this optimised compressed trace equivalence actually
coincides with ${\eintc}$.




\section{Reduction}
\label{sec:reduction}

Our compressed semantics cuts down interleavings by using a simple
focused strategy. 
However,
this semantics does not analyse data dependency that happen when an input
depends on an output, and is thus unable to exploit the independency of
blocks to reduce interleavings.
We tackle this problem now.

\newcommand{\block}{\mathit{block}}


%
%

\begin{definition}
\label{def:par-seq}
Two blocks $b_1$ and $b_2$ are \emph{independent}, written  $b_1
\inpar b_2$, when all labelled actions $\alpha_1 \in b_1$ and $\alpha_2 \in b_2$
are independent. Otherwise they
are \emph{dependent}, written $b_1 \inseq b_2$.
\end{definition}

Obviously, Lemma~\ref{lem:perm} tells us that independent blocks can
be permuted in a trace without affecting the executability and the
result of executing that trace. But this notion is not very strong
since it considers fixed recipes, which are irrelevant (in the end,
only the derived messages matter) and can easily introduce spurious
dependencies.
Thus we define a stronger notion of equivalence over
traces, which allows permutations of independent blocks but also
changes of recipes that preserve messages. 
During these permutations, we will also require that traces remain {\em plausible},
which is defined as follows:
$\tr$ is plausible if for any input $\In(c,M)$ such that
$\tr=\tr_0.\In(c,M).\tr_2$
then $M\in\T(\W)$ where $\W$ is the set of all handles occurring in $\tr_0$.
Given a block $b$, \emph{i.e.} a sequence of the form
$\Foc(\alpha). \tr^+. \Rel. \tr^-$, we denote by $b^+$ (resp. $b^-$)
the part of $b$ corresponding to the
positive (resp. negative) phase, \ie $b^+ = \alpha.\tr^+$
(resp. $b^-=\tr^-$).
We note $(b_1=_\E b_2)\Phi$ when $b_1^+\Phi =_\E b_2^+\Phi$
and $b_1^-=b_2^-$.

\begin{definition}
\label{def:eqtb}
  Given a frame $\Phi$,
  the relation $\eqtb{\Phi}$ is the smallest equivalence over 
  plausible compressed traces 
such
  that
$\tr.b_1.b_2.\tr' \eqtb{\Phi} \tr.b_2.b_1.\tr'$ when 
  $b_1\mathrel{\inpar} b_2$, and
$\tr.b_1.\tr'\eqtb{\Phi} \tr.b_2.\tr'$ when $(b_1=_\E b_2)\Phi$.
\end{definition}


\restatablelemma{lem:block-swap}{
  Let $A$ and $A'$
  be two initial configurations such that
  $A \sintc{\tr} A'$.
  We have that 
  $A \sintc{\tr'} A'$
  for any $\tr' \eqtb{\Phi(A')} \tr$.
}

We now turn to defining our reduced semantics, which is going to
avoid the redundancies identified above by only executing specific
representatives in equivalence classes modulo $\eqtb{\Phi}$.
More precisely, we shall only execute
minimal traces according to some order, which we now introduce.
We assume an order~${\prec}$ on blocks that is insensitive to recipes,
and such that independent blocks are always strictly ordered
in one way or the other.
We finally define~${\ordlex}$ on compressed traces as the lexicographic
extension of~${\prec}$ on blocks.


In order to incrementally build representatives that are minimal
with respect to ${\ordlex}$, we define a predicate
that expresses whether a block $b$ should be \emph{authorised} after
a given trace~$\tr$.
Intuitively, this is 
the 
case only when, for any block
$b'\succ b$ in $\tr$, dependencies forbid to swap $b$ and $b'$.
We define this with recipe dependencies first, then quantify over all
recipes to capture 
message dependencies.

\begin{definition}
\label{def:dep-analysis}
A block $b$ is authorised after $\tr$, noted $\constrd{b}{\tr}$, when
$\tr = \epsilon$; or $\tr  = \tr_0. b_0$ and either
{\it (i)} $b \inseq b_0$ or 
{\it (ii)} $b \inpar b_0$, $b_0 \prec b$, and
  $\constrd{b}{\tr_0}$.
\end{definition} 


We finally define $\sintd{}$ as the least relation such that:
\[
\begin{array}{llcll}
\mbox{\sc Init} \;& \multirow{2}{*}{\infer[]{A \sintd{\epsilon}
    A}{\phantom{\proc{\p}{\Phi} \sintd{\tr}}}}& &\mbox{\sc Block} \;&
\multirow{2}{*}{\infer[\begin{array}{c}\mbox{if $\constrd{b'}{\tr}$ for all $b'$}\\ \mbox{ with
    $(b'=_\E b)\Phi$}\end{array}]{A\sintd{\tr.b} A'}{
       A\sintd{\tr} \procc{\p}{\wfoc}{\Phi} &
       \procc{\p}{\wfoc}{\Phi} \sintc{b} A'}} \\
&&&
\end{array}
\]
Our reduced semantics only applies to initial configurations:
otherwise, no block can be performed.
This is not restrictive since we can, without loss
of generality, pre-execute
non-observable and output actions that may occur at top level.

\subsection{Reachability}
An easy induction on the compressed trace~$\tr$ allows us to map an execution w.r.t. the reduced semantics to an
execution w.r.t. the compressed semantics. 

\begin{lemma}
  For any configurations $A$ and $A'$,
  $A \sintd{\tr} A'$ implies $A \sintc{\tr} A'$.
\label{lem:reach-red-sound}
\end{lemma}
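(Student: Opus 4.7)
The plan is to proceed by a straightforward induction on the structure of the derivation $A \sintd{\tr} A'$, equivalently by induction on the length of the trace $\tr$. The key observation is that the reduced semantics is defined \emph{on top of} the compressed semantics: the rule {\sc Block} explicitly requires a compressed step $\procc{\p}{\wfoc}{\Phi} \sintc{b} A'$ and only adds an authorisation side condition. Hence, soundness amounts to simply discarding this side condition.

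For the base case, $\tr = \epsilon$ and $A = A'$ by rule {\sc Init}, so $A \sintc{\epsilon} A$ trivially holds by reflexivity of $\sintc{}$ on the empty trace. For the inductive step, suppose $A \sintd{\tr.b} A'$ has been derived by {\sc Block}. Then by inversion there exist $\p$ and $\Phi$ such that $A \sintd{\tr} \procc{\p}{\wfoc}{\Phi}$ and $\procc{\p}{\wfoc}{\Phi} \sintc{b} A'$, together with the authorisation condition $\constrd{b'}{\tr}$ for every $b'$ such that $(b' =_\E b)\Phi$. By the induction hypothesis, $A \sintc{\tr} \procc{\p}{\wfoc}{\Phi}$. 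Composing this compressed execution with the compressed block transition $\procc{\p}{\wfoc}{\Phi} \sintc{b} A'$ yields $A \sintc{\tr.b} A'$, as required.

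There is no real obstacle here: the authorisation predicate $\constrd{\cdot}{\cdot}$ and the quantification over $\eint$-equivalent blocks play no role in the target statement, and are simply dropped. The content of the lemma is entirely structural, reflecting the fact that $\sintd{}$ is, by construction, a restriction of $\sintc{}$. The more substantive direction --- showing that the reduced semantics still captures enough traces for trace equivalence --- is the subject of later results; this lemma only establishes the easy half.
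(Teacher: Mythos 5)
Your proof is correct and takes essentially the same approach as the paper, which dispatches this lemma in one line as ``an easy induction on the compressed trace $\tr$'': since the {\sc Block} rule of the reduced semantics has a compressed transition as an explicit premise, soundness amounts to dropping the authorisation side condition, exactly as you argue. (One terminological nit: the side condition quantifies over blocks $b'$ with $(b'=_\E b)\Phi$, i.e.\ blocks equal modulo the equational theory under the frame, not ``$\eint$-equivalent'' blocks; this does not affect your argument.)
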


Next, we show that our reduced semantics
only explores specific representatives.
Given a frame $\Phi$, a plausible trace $\tr$ is
$\Phi$-minimal if it is minimal\footnote{
  Note that minimal traces are not unique, since only labelled skeletons are taken
  into account when comparing actions. However, the redundancy induced by
  the choice of recipes is not a concern in practice as it does not arise
  with the current constraint-based techniques for deciding trace equivalence.
}
in its equivalence class modulo $\eqtb{\Phi}$.

\restatablelemma{lem:min-swap}{
  Let $A$ be an initial configuration
  and $A'=\procc{\p}{\wfoc}{\Phi}$ be a configuration such that $A\sintc{\tr} A'$.
  We have that $\tr$ is $\Phi$-minimal if, and only if,
  $A\sintd{\tr}A'$.
}

\begin{proof}[Proof sketch]
  In order to relate minimality and executability in the reduced semantics,
  let us say that a trace is \emph{bad} if it is of the form
  $\tr.b_0\ldots b_n.b'.\tr'$ where
  $n\geq 0$, there exists a block $b''$ such that
  $(b''=_\E b')\Phi$, we have
  $b_i \inpar b''$ for all $i$, and
  $b_i \prec b'' \prec b_0$ for all $i>0$.
  This pattern is directly inspired by the characterisation of
  lexicographic normal forms by Anisimov and Knuth in trace monoids~\cite{knuth-sorting}.
  We note that a trace that can be executed in the compressed 
  semantics can also be executed in the reduced semantics if, and only if,
  it is not bad. Since the badness of a trace allows to swap $b'$ before
  $b_0$, and thus obtain a smaller trace in the class~$\equiv_\Phi$,
  we show that a bad trace cannot be $\Phi$-minimal (and conversely).
\end{proof}

\subsection{Equivalence}

The reduced semantics 
 induces an equivalence $\eintd$
that we define similarly to the compressed one, and
we then establish its soundness and completeness w.r.t.~$\eintc$.


\begin{definition}
Let~$A$ and~$B$ be two configurations. We say that $A \sqsubseteq_r B$ when,
for every $A \sintd{\tr} A'$ such that $\bc(\tr)\cap\fc(B)=\varnothing$, 
there exists $B\sintd{\tr} B'$ such that $\Phi(A')\statequiv \Phi(B')$.
They are  \emph{reduced trace equivalent}, denoted $A \eintd B$, if 
$A \sqsubseteq_r B$ and $B \sqsubseteq_r A$.
\label{def:eintd}
\end{definition}


\restatabletheorem{thm:redequiv}{
  Let $A$ and $B$ be two initial,
  action-deterministic configurations. 
\\[1mm]
\null\hfill
$A \eintc B \mbox{ if, and only if, }
A \eintd B$\hfill\null
}

\begin{proof}[Proof sketch]
We first prove that $\tr \eqtb{\Phi} \tr'$ iff
  $\tr \eqtb{\Psi} \tr'$ when $\Phi \statequiv \Psi$. 
%
($\Rightarrow$) 
This implication 
is then an easy consequence of Lemma~\ref{lem:min-swap}.
($\Leftarrow$) 
We start by showing that it suffices to consider 
a 
complete execution  $A\sintc{\tr}A'$.
Since $A'$ is initial, by taking $\tr_m$ to be a $\Phi(A')$-minimal
trace associated to $\tr$, we obtain a reduced execution of $A$
leading to $A'$. Using our hypothesis $A\eintd B$,  we obtain that $B\sintd{\tr_m}B'$ with corresponding relations
over frames. We finally conclude that $B\sintc{\tr}B'$ using 
Lemma~\ref{lem:block-swap} and the result stated above.
\end{proof}

\label{rmk:redimpropre}
\myparagraph{Improper blocks.}
  Similarly as we did for the compressed semantics in 
  Section~\ref{sec:compression}, we can further restrict
  ${\eintd}$ to only check proper traces (see Appendix~\ref{sec:impropre}).

\section{Application} \label{sec:application}

We have developed two successive refinements of the concrete semantics of
our process algebra, eventually obtaining a reduced semantics that
achieves an optimal elimination of redundant interleavings.
However, the practical usability of these semantics in algorithms for checking 
the equivalence of replication-free processes is far from immediate: indeed, 
all of our semantics are still infinitely branching, because each input may be 
fed with arbitrary messages.
We now discuss how existing decision procedures based on
symbolic execution~\cite{MS02,CCLZ-TOCL,Tiu-csf10,cheval-ccs2011} can
be modified to decide our optimised equivalences rather than the regular one,
before presenting our implementation and experimental results.

\subsection{Symbolic execution}
\label{subsec:symbolic}
Our compressed semantics can easily be used as a replacement of the regular
one, in any tool whose algorithm is based on a forward exploration of
the set of possible traces.
This modification is very lightweight, and already brings a significant
optimisation.
In order to make use of our final, reduced semantics, we need to enter into
the details of constraint solving. In addition to imposing the compressed
strategy, symbolic execution should be modified to generate \emph{dependency
constraints} 
in order to reflect the data dependencies imposed
by our predicate $\constrd{b}{\tr}$. Actually, the generation of
dependency constraints can be done in a similar way as shown
in~\cite{BDH-post14} (even if the class of processes considered
in~\cite{BDH-post14} is more restrictive). We simply illustrate the
effect of these dependency constraints on a simple example.

\begin{example}
We consider roles
$R_i := \In(c_i,x).\testt{x=\ok}{\Out(c_i,\ok)}$
where $\ok$ is a public constant, and then
consider a parallel composition of $n$ such processes:
$P_n := \Pi_{i=1}^n R_i$. Thanks to compression, we will only
consider traces made of blocks, and  obtained a first exponential
reduction of the state space.  Now, assume
that our order prioritizes blocks on
$c_i$ w.r.t. those on $c_j$ when $i < j$, and consider a trace starting with
$\In(c_j, \cdot). \Out(c_j, w_j)$. Trying to continue the exploration
with an input action on $c_i$ with $i < j$, the dependency constraint added will impose
that the recipe $R_i$ used to feed the input on $c_i$ makes use of the previous
output to derive $\ok$. Actually, this dependency constraint will impose more than
that. Indeed, imposing that $R_i$ has to use $w_j$ is not very
restrictive since in this example there are many ways to rely on $w_j$
to derive $\ok$, \emph{e.g.} $R_i = \dec{\enc{\ok}{w_j}}{w_j}$ or even
  $R_i = w_j$.

Actually, according to our reduced semantics, 
this step will be possible only
if the message stored in $w_j$ is mandatory to derive  the message
expected in input on channel $c_i$.
In this example, the conditional imposes that the recipe $R_i$ 
leads to the message $\ok$ (at least to pursue in the then branch),
and even if there are many ways to derive
$\ok$, in any case the content of $w_j$ is not mandatory for that.
Thus, on this simple example, all the traces where the action
$\In(c_j,\cdot)$ is performed after the block on $c_j$ with $i < j$ will
not be explored thanks to our reduction technique.
\end{example}

The constraint solver is then modified in a non-invasive way:
dependency constraints are used to dismiss configurations when
it becomes obvious that they cannot be satisfied.
The modified verification algorithm may explore symbolic
traces that do not correspond to $\Phi$-minimal representatives
(when dependency constraints cannot be shown to be infeasible)
but we will see that this approach allows us to obtain a
very effective optimisation.
Note that, because we may over-approximate dependency constraints,
we must ensure that constraint resolution prunes executions
in a symmetrical fashion for both processes being checked for equivalence.

\begin{remark}
  A subtle point about compression is that it actually enhances reduction
  in a symbolic setting.
  Consider the process
  $P = \In(c,x).\Out(c,n_1).\Out(c,n_2)$
  in parallel with $Q = \In(c',x)$.
  If ${\prec}$ gives priority to $Q$,
  then $Q$ can only be scheduled after $P$ if its input message
  requires the knowledge of one of the nonces~$n_1$ and~$n_2$ revealed by $P$.
  Thus we have two symbolic interleavings, one of which is subject to
  a dependency constraint.
  Now, we could have applied the ideas of reduction directly on
  actions rather than blocks but we would have obtained
  three symbolic interleavings, reflecting the fact that
  if the input on $c'$ depends on only the first output nonce,
  it should be scheduled before the second output.
\end{remark}

\subsection{Experimental results}

The optimisations developed in the present paper have been implemented,
following the above approach, in the official version of the
state of the art tool
\Apte~\cite{apte-github}.
We now report on experimental results;
sources and instructions for reproduction are available~\cite{apte-por}.
We only show examples in which equivalence holds, because the time
spent on inequivalent processes is too sensitive to the order in which
the (depth-first) exploration is performed.

\renewcommand{\myparagraph}[1]{\medskip\noindent\emph{#1}~}

\myparagraph{Toy example.} We consider again our simple example
described in Section~\ref{subsec:symbolic}.
We ran \Apte on $P_n \approx P_n$ for $n = 1$ to $22$,
on a single 2.67GHz Xeon core (memory is not relevant).
We performed our tests on the reference version
and the versions optimised with the compressed and reduced semantics
respectively.
The results are shown on the left graph of Figure~\ref{fig:graphs},
in logarithmic scale: it confirms that
each optimisation brings an exponential speedup,
as predicted by our theoretical analysis.

\begin{figure}[htpb]
  \begin{center}
    \includegraphics{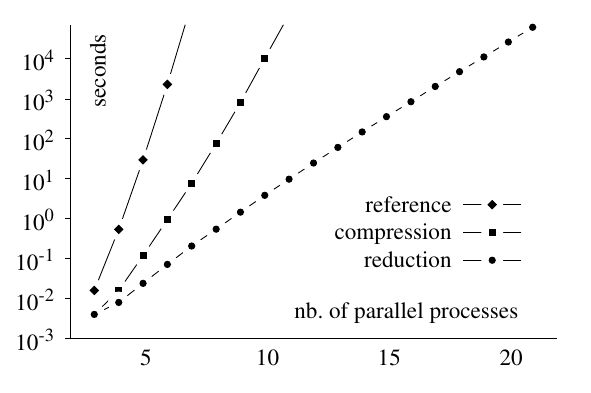}
    \includegraphics{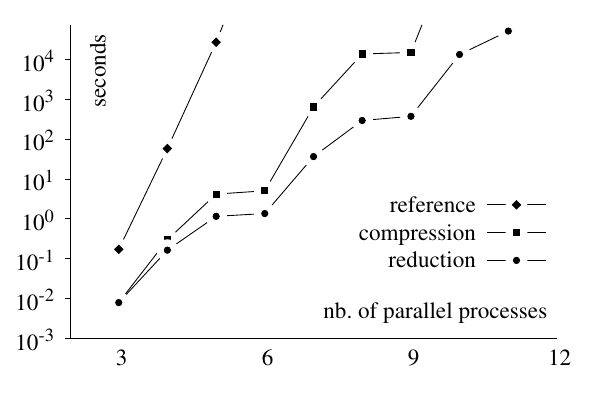}
    \vspace{-0.5cm}
  \end{center}
  \caption{Impact of optimisations on toy example (left)
  and Denning-Sacco (right).}
  \label{fig:graphs}
\end{figure}

\myparagraph{Denning-Sacco protocol.} We ran a similar benchmark,
checking that Denning-Sacco ensures strong secrecy in various scenarios. The protocol
has three roles and we added processes playing those roles in turn,
starting with three processes in parallel.
The results are plotted on 
Figure~\ref{fig:graphs}.
The fact that we add one role out of three at each step
explains the irregular growth in verification time. 
We still observe an exponential speedup for each optimisation.

\myparagraph{Practical impact.}
Finally, we illustrate how our optimisations make \Apte much more useful
in practice for investigating interesting scenarios.
Verifying a single session of a protocol brings little assurance
into its security. In order to detect replay attacks and to allow the
attacker to compare messages that are exchanged, at least two sessions
should be considered.
This means having at least four parallel processes for two-party protocols,
and six when a trusted third party is involved.
This is actually beyond 
what the unoptimised \Apte can handle in a reasonable amount
of time.
We show below
how many parallel processes could be handled in $20$ hours
by the different versions of \Apte on various use cases of protocols.

\medskip{}

\noindent\begin{tabular}{|l|>{\centering\arraybackslash}p{0.3cm}|>{\centering\arraybackslash}p{0.6cm}|>{\centering\arraybackslash}p{0.3cm}
||l|>{\centering\arraybackslash}p{0.3cm}|>{\centering\arraybackslash}p{0.6cm}|>{\centering\arraybackslash}p{0.3cm}|} \hline
 Protocol & ref & comp & red &Protocol & ref & comp & red \\ \hline
      Needham Schroeder (3-party)& 4 & 6  & 7 &   Denning-Sacco (3-party) & 5 & 9  & 10 \\
  Private Authent. (2-party)& 4 & 7 & 7 &   WMF (3-party)& 6 & 12 & 13 \\
  Yahalom (3-party) & 4 & 5 &5 &  E-Passport PA (2-party) & 4 & 7 & 9 \\ \hline

\end{tabular}

\section{Related Work}
\label{sec:relWork}

The techniques we have presented borrow from standard ideas
from concurrency theory, trace theory and, perhaps more surprisingly,
proof theory. Blending all these ingredients, and adapting them to the
demanding framework of security protocols, we have come up with
partial order reduction techniques that can effectively be used in
symbolic verification algorithms for equivalence properties
of security protocols.
We now discuss related work, and there is a lot of it given the
huge success of POR techniques in various application areas.
We shall focus on the novel aspects of our approach,
and explain why such techniques have not been needed outside
of security protocol analysis. These observations are not new:
as pointed out by Baier and Katoen~\cite{ModelCheckingBook},
``[POR] is mainly appropriate to control-intensive applications and less
suited for data-intensive applications'';
Clarke \emph{et al.}~\cite{clarke2000partial} also remark that
``In the domain of model checking of
reactive systems, there are numerous techniques for reducing
the state space of the system. One such technique
is partial-order reduction. This technique does not directly
apply to [security protocol analysis] 
because we explicitly keep
track of knowledge of various agents, and our logic can
refer to this knowledge in a meaningful way.''

We first compare our work with classical POR techniques.
Then, we discuss more specifically previous works that use POR
in a symbolic execution setting, and comment
on previous work in the domain of security protocol analysis.
We conclude the section with some remarks on the relationship
between our optimized semantics and focused proof systems.

%

\subsection{Classical POR}

Partial order reduction techniques
have proved very useful in the domain of model checking concurrent programs.
Given a Labelled Transition System (LTS) and some property to check
(\eg a Linear Temporal Logic formula), 
the basic idea of POR~\cite{Peled98,godefroid1996partial,ModelCheckingBook}
is to only consider a reduced
version of the given LTS whose enable transitions of some states might be not exhaustive
but are such that this transformation does not affect the property.
POR techniques can be categorized in two groups~\cite{godefroid1996partial}.
First, the {\em persistent set} techniques
(\eg {\em stubborn sets, ample sets}) where only a
sufficiently representative subset of available transitions is explored.
Second, {\em sleep set} techniques memoize past exploration
and use this information along with available transitions to
disable some provably redundant transitions.
Note that these two kinds of techniques are compatible, and are indeed
often combined to obtain better reductions.
Theoretical POR techniques apply to transition systems which may not be
explicitly available in practice, or whose explicit computation may be
too costly. In such cases, POR is often applied to an approximation of
the LTS that is obtained through static analysis. Another, more recent
approach is to use
\emph{dynamic 
POR}~\cite{flanagan2005dynamic,tasharofi2012transdpor,abdulla2014optimal}
where the POR arguments are applied
based on information that is obtained during the execution of the
system.


Clearly, classical POR techniques would apply
to our concrete LTS, but that would not be practically useful since this
LTS is wildly infinite, taking into account all recipes that the attacker
could build.
Applying most classical POR techniques to the LTS from which
data would have been abstracted away would be ineffective:
any input would be dependent with any output (since the attacker's
knowledge, increased by the output, may enable new input messages).
Our compression technique lies between these two extremes. It exploits
a semi-commutation property: outputs can be permuted before inputs,
but not the converse in general.
Further, it exploits the fact that inputs
do not increase the attacker's knowledge, and can thus be executed
in a chained fashion, under focus.
The semi-commutation is reminiscent of the asymmetrical dependency
analysis enabled by the \emph{conditional} stubborn set
technique~\cite{godefroid1996partial},
and the execution of inputs under focus may be explained
by means of sleep sets.
While it may be possible to formally derive our compressed semantics
by instantiating abstract POR techniques to our setting,
we have not explored this possibility in detail\footnote{
  Although this would be an interesting question, we do not expect
  that any improvement of compression would come out of it.
  Indeed, compression can be argued to be maximal in terms of
  eliminating redundant traces without analyzing data:
  for any compressed trace there is a way to choose messages
  and modify tests to obtain a concrete execution
  which does not belong to the equivalence class
  of any other compressed trace.
}.
As mentioned earlier, the compressed semantics is inspired from
another technique, namely focusing~\cite{Andreoli92} from proof theory.
Concerning
our reduced semantics, it may be seen as an application of the sleep set
technique (or even as a reformulation of Anisimov's and Knuth's
characterization of lexicographic normal forms) but the real contribution
with this technique is to have formulated it in such a way that it can
be implemented without requiring an \emph{a priori} knowledge of
data dependencies: it allows us to eliminate redundant traces on-the-fly
as data (in)dependency is discovered by the constraint resolution procedure
(more on this in the next sections)
--- in this sense, it may be viewed as a case of dynamic POR.

Narrowing the discussion a bit more, we now focus on the fact that
our techniques are designed for the verification of equivalence properties.
This requirement turns several seemingly trivial observations into
subtle technical problems. For instance,
ideas akin to compression are often applied without justification
(\eg in~\cite{sen2006automated,tasharofi2012transdpor,ModersheimVB10})
because they are obvious when one does
reachability rather than equivalence checking.
To understand this,
it is important to distinguish between two very
different ways of applying POR to equivalence checking
(independently of the precise equivalence under consideration).
The first approach is to reduce a system such that the reduced system
and the original systems are equivalent.
In the second approach, one only requires that two reduced
systems are equivalent iff the original systems are equivalent.
The first approach seems to be more common in the POR 
litterature (where one finds, \eg reductions that preserve
LTL-satisfiability~\cite{ModelCheckingBook}
or bisimilarity~\cite{huhn98fsttcs})
though there are instances of the second approach
(\eg for Petri nets~\cite{godefroid1991using}).
In the present work, we follow the second approach:
neither of our two reduction techniques preserves trace equivalence.
This allows stronger reductions but requires extra care:
one has to ensure that the independencies used in the reduction of one
process are also meaningful for the other processes; in other words,
reduction has to be symmetrical.
This is the purpose of our annotated semantics and its ``strong symmetry
lemma'' (Lemma~\ref{lem:strong-symmetry}) but also, for the reduced semantics,
of Lemma~\ref{lem:min-swap} and Proposition~\ref{prop:eqtb-sym}.
We come back to these two different approaches later, when discussing
specific POR techniques for security.

\subsection{Infinite data and symbolic execution}

Symbolic execution is often used to verify systems dealing with infinite data,
\eg recipes and messages in security; integers, list, etc. in program
verification.
In many works combining POR and symbolic executions
(\eg \cite{sen2006automated,siegel2006using,tasharofi2012transdpor})
the detection of redundant explorations does not rely on the data,
and can thus be done trivially at the level of symbolic executions.
In such cases, POR and symbolic execution are orthogonal.
For instance, in~\cite{sen2006automated},
two actions are \emph{data-dependent}
if one actions is a {\em send} action of some message $m$ to some process $p$ 
and the other is a {\em receive} action of process $p$.
This is done independently of $m$, which is meaningful when one considers
internal reduction (as is the case that work) but would be too coarse when one
considers labelled transitions representing interactions with an environment
that may construct arbitrary recipes from previous outputs (as is the case
in our work). Due to this omnipotent attacker, POR techniques cannot be
effective in our setting unless they really take data into account. Further,
due to the infinite nature of data, and the dynamic extension of the
attacker's knowledge, POR and symbolic execution must be integrated
rather than orthogonal.
Our notion of independence
and trace monoid are tailored to take all this into account,
including at the level of trace equivalence
(cf.~again Lemma~\ref{lem:min-swap}).
And, although reduction is presented (Section~\ref{sec:reduction})
in the concrete semantics, the crucial point is that
the {\em authorised} predicate (Definition~\ref{def:dep-analysis})
is implemented as a special \emph{dependency} constraint
(Section~\ref{sec:application}) so that our POR algorithm fundamentally
relies on symbolic execution.
We have not seen such uses of POR outside of the security applications
mentioned next~\cite{ModersheimVB10,BDH-post14}.

\subsection{Security applications}

The idea of applying POR to the verification of security protocols dates
back, at least, to the work of
Clarke \emph{et al.}~\cite{clarke2000partial,ClarkeJM03}.
In this work, the authors remark that traditional POR techniques
cannot be directly applied to security mainly because
``[they] must keep track of knowledge of various agents''
and ``[their] logic can refer to this knowledge in a meaningful way''.
This led them to define a notion of {\em semi-invisible actions}
(output actions, that cannot be swapped after inputs but only before them)
and design a reduction that prioritizes outputs
and performs them in a fixed order.
Compared to our work, this reduction is much weaker
(even weaker than compression only),
only handles a finite set of messages, and
only focuses on reachability properties checking.

In \cite{escobar14ic}, the authors develop ``state space reduction''
techniques for the Maude-NRL  Protocol Analyzer (Maude-NPA). This tool proceeds
by backwards reachability analysis and treats at the same level
the exploration of protocol executions and attacker's deductions.
Several reductions techniques are specific to this setting,
and most are unrelated to partial order reduction in general,
and to our work in particular.
We note that the lazy intruder techniques from \cite{escobar14ic}
should be compared to what is done in constraint resolution procedures
(\eg the one used in \Apte) rather than to our work.
A simple POR technique used in Maude-NPA is based on the observation
that inputs can be executed in priority in the backwards exploration,
which corresponds to the fact that we can execute outputs first in
forward explorations. We note again that this is only one aspect
of the focused strategy, and that it is not trivial to lift this observation
from reachability to trace equivalence.
Finally, a ``transition subsumption'' technique is described for Maude-NPA.
While highly non-trivial due to the technicalities of the model,
this is essentially a tabling technique rather than a partial order
reduction. Though it does yield a significant state space reduction
(as shown in the experiments \cite{escobar14ic}) it falls short of
exploiting independencies fully, and has a potentially high computational
cost (which is not evaluated in the benchmarks of 
\cite{escobar14ic}).

In \cite{fokkink2010partial},
Fokkink {\it et al.} model security protocols as labeled transition
systems whose states contain the control points of different agents
as well as previously outputted messages.
They devise some POR technique for these transition systems,
where output actions are prioritized and performed in a fixed order.
In their work, the original and reduced systems
are trace equivalent {\em modulo} outputs
(the same traces can be found after removing output actions).
The justification for their reduction would fail in our setting,
where we consider standard trace equivalence with observable outputs.
More importantly, their requirement that a reduced system should be
equivalent to the original one makes it impossible to swap input actions,
and thus reductions such as the execution under focus of our
compressed semantics cannot be used.
The authors leave as future work the problem of combining
their algorithm with symbolic executions, in order to be able
to lift the restriction to a finite number of messages.

Cremers and Mauw proposed~\cite{cremers2005checking} a reduction technique
based on the {\em sleep set} idea. Basically, when their exploration algorithm
chooses to explore a specific action (an output or an input with its corresponding message),
it will also add all the other available actions
that have priority over the chosen one to the current {\em sleep set}.
An action in this sleep set will never been explored.
In this work, only reachability property are considered, and the
reduction cannot be directly applied to trace equivalence checking.
More importantly, the technique can only handle a finite set of messages.
The authors identify as important future work the
need to lift their method to the symbolic setting.

Earlier work by Mödersheim {\it et al.}
has shown how to combine POR technique with symbolic semantics~\cite{ModersheimVB10}
in the context of reachability properties for security protocols,
which has led to high efficiency gains in the OFMC tool of the
AVISPA platform~\cite{sysdesc-CAV05}.
While their reduction is very
limited, it brings some key insight on how POR may be combined
with symbolic execution.
In a model where actions are sequences of outputs followed by inputs,
their reduction imposes a \emph{differentiation} constraint on the
interleavings of $\In(c,x).\Out(c,m)\Par\In(d,y).\Out(d,m')$.
This constraint enforces that the symbolic interleaving
${\In(d,M').\Out(d,w').\In(c,M).\Out(c,w)}$
should only be explored for instances of $M$ that depend on $w'$.
Our reduced semantics constrains patterns of arbitrary size (instead of
just size 2 diamond patterns as above) by means of the
{\em authorised} predicate (Definition~\ref{def:dep-analysis}).
Moreover, our POR technique has beed designed to be sound and complete
for trace equivalence checking as well.

In a previous work \cite{BDH-post14},
we settled the general ideas for the POR techniques presented in the
present paper, but results were much weaker.
That earlier work only dealt with the restrictive class of
\emph{simple processes}, which does not feature nested parallel
composition or replication, and made heavy use of specific properties
of processes of that class to define reductions and prove them correct.
In the present work, we show that our two reduction techniques apply
to a very large class of processes for reachability checking.
For equivalence checking, we only require the semantic condition of
action-determinism.
Note that the results of the present paper are conservative over those
of \cite{BDH-post14}: the reductions of \cite{BDH-post14} are obtained
as a particular case of the results presented here in the case of
simple processes.
Finally, the present work brings a solid implementation in the state of the 
art tool Apte~\cite{Cheval-tacas14},
whereas \cite{BDH-post14} did not present experimental results ---
it mentioned a
preliminary implementation, extending SPEC, which we abandoned since it
was difficult to justify and handled a more restricted class
than APTE.


\subsection{Relationship with focused proof systems}

The reader familiar with focused proof systems \cite{Andreoli92} will have
recognized the strong similarities with our compressed semantics.
The strategies are structured in the same way, around positive and negative
phases. More deeply, the compressed semantics can actually be derived
systematically from the focused proof system of linear logic, through an
encoding of our processes into linear logic formulas (such that proof
search corresponds to process executions).
There are several such encodings in the
literature, see for instance \cite{Miller03,GargP05,Simmons12}.
%
We do not provide here a fully worked-out encoding appropriate for
our protocols.
It is not trivial, notably due to the need to encode the attacker's
knowledge, and internal reductions of protocols
--- both features require slight extensions of the usual linear logic
framework.
We have thus chosen to only take the correspondence with linear logic
as an intuitive guide, and give a self-contained (and simple) proof
of completeness for our compressed semantics
by adapting the positive trunk argument of~\cite{saurin}. 
Note that the strong analogies with proof theory only hold for
reachability results, \ie up to Lemma~\ref{thm:reach-comp}.
It is a contribution of this paper to observe that
focusing (compression) makes sense beyond reachability,
at the level of trace equivalence:
Theorem~\ref{thm:eintc} (stating that trace equivalence
coincides with compressed trace equivalence for action-deterministic
processes) has no analogue in the proof theoretical setting, where
trace equivalence itself is meaningless.

We motivated reduction by observing that (in)dependencies between
blocks of the compressed semantics should be exploited to eliminate
redundant interleavings. This same observation has been done in the
context of linear logic focusing, and lead to the idea of
multi-focusing~\cite{chaudhuri08tcs} where independent synthetic
connectives (the analogue of our blocks) are executed simultaneously
as much as possible. That work on multi-focusing is purely theoretical,
and it is unclear how multi-focusing could be applied effectively
in proof search. It would be interesting to consider whether
the gradual construction of unique representatives in our
reduced semantics could be extended to the richer setting of
linear logic (where proof search branches, unlike process executions).

\section{Conclusion} \label{sec:conclusion}

We have developed two POR techniques that are adequate for verifying
reachability and trace equivalence properties of action-deterministic
security protocols. We have effectively implemented them in \Apte,
and shown that they yield the expected, significant benefit.

We are considering several directions for future work.
Regarding the theoretical results presented here, the main question is whether we can get 
rid of the action-determinism condition without degrading our reductions too 
much.
Regarding the practical application of our results, we can certainly
go further.
We first note that our compression technique should be applicable
and useful in other verification tools,
not necessarily based on symbolic execution.
 Next, we could investigate the role of the particular choice
 of the order $\prec$, to determine heuristics for maximising the
 practical impact of reduction.
Finally, we plan to adapt our treatment of replication to bounded
replication to obtain a first symmetry elimination scheme, which should
provide a significant optimisation when studying security protocols
with several sessions.

\bibliography{biblio} 

\newpage

\appendix

\section{Annotated semantics}
\label{sec:app:prelim}

\subsection{Reachability}

\begin{proposition}
  Well labelling is preserved by $\sinta{}$.
  \label{prop:wlabel-stable}
\end{proposition}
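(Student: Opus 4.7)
The plan is a straightforward case analysis on the rule of $\sinta{}$ that is applied, using the following elementary observation about prefix (in)dependence on words: if $\ell$ and $\ell'$ are independent labels (neither is a prefix of the other), then for any non-empty suffix $s$, the labels $\ell \cdot s$ and $\ell'$ are again independent. Indeed, if $\ell'$ were a prefix of $\ell \cdot s$, then either $|\ell'| \leq |\ell|$ (so $\ell'$ would be a prefix of $\ell$) or $|\ell'| > |\ell|$ (so $\ell$ would be a strict prefix of $\ell'$); both cases contradict independence. The converse direction is immediate since $\ell$ is itself a prefix of $\ell \cdot s$, so any prefix of $\ell$ is a prefix of $\ell \cdot s$.

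Let $\proc{\p}{\Phi}$ be well labelled and consider any transition $\proc{\p}{\Phi} \sinta{\loc{\alpha}{\ell}} \proc{\p'}{\Phi'}$. The four rules are handled as follows. For \textsc{In\slash Out}, the active process $\loc{P}{\ell}$ is simply replaced by $\loc{P'}{\ell}$ with the same label, so $\p'$ has exactly the same multiset of labels as $\p$ and is therefore still well labelled. For \textsc{Zero}, the active process $\loc{0}{\ell}$ is removed and no label is changed, so the well-labelling property is trivially inherited.

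The two interesting cases are \textsc{Repl} and \textsc{Par}, which are handled uniformly. In both cases a process with label $\ell$ is replaced by finitely many processes whose labels are all of the form $\ell \cdot i$ for distinct indices $i$. First, these new labels are pairwise independent of each other since $\ell \cdot i$ and $\ell \cdot j$ with $i \neq j$ are incomparable for the prefix order. Second, for any other process $\loc{Q}{\ell'} \in \p$ (which, by the well-labelling hypothesis, satisfies $\ell$ and $\ell'$ independent), the observation above gives that $\ell \cdot i$ and $\ell'$ are independent for every $i$. Hence $\p'$ is a multiset of labelled processes with pairwise independent labels, which is the definition of well labelling.

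I do not foresee any real obstacle: the whole argument reduces to the trivial word-theoretic lemma on extending labels by a non-empty suffix, together with checking that the new labels introduced by \textsc{Repl} and \textsc{Par} are mutually independent. The only mild subtlety worth stating explicitly in the write-up is the freshness convention for the channels $\vect c$ and $\vect n$ in \textsc{Repl}, which ensures that the transition is well formed but plays no role in the labelling argument itself.
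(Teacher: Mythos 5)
Your proposal is correct and follows essentially the same route as the paper's proof: the \textsc{In}\slash\textsc{Out} and \textsc{Zero} cases are dispatched by noting the multiset of labels is unchanged or shrinks, and for \textsc{Par}\slash\textsc{Repl} you give the same prefix-order case analysis showing that the fresh labels $\ell \cdot i$ are pairwise independent and remain independent of every other label of the configuration. The only cosmetic difference is that you factor the prefix argument out as a standalone word-theoretic lemma (stated for an arbitrary non-empty suffix), whereas the paper inlines it in the \textsc{Par} case and treats \textsc{Repl} as a two-process instance of \textsc{Par}.
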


\begin{proof}
  For all transitions except {\sc Par} and {\sc Repl}, the multiset
  of labels of the resulting configuration is a subset of labels
  of the original configuration. Thus, well labelling is obviously
  preserved in those cases.

  Consider now a {\sc Par} transition, represented below without
  the permutation, which does not play a role here:
  \[
  \proc{\loc{\{\Pi_{i=1}^n P_i\}}{\ell}\}\uplus\p}{\Phi}
  \lrstep{\loc{\Para(\sigma_{1}, \ldots, \sigma_{n})}{\ell}}
  \proc{\{\loc{P_{1}}{\ell \cdot 1},\ldots,\loc{P_{n}}{\ell \cdot n}\}\uplus\p}{\Phi}
  \]
  We check that labels are pairwise independent.
  This is obviously the case for the new labels $\ell \cdot i$.
  Let us now consider a label $\ell'$ from $\p$ and show that it is
  independent from any $\ell \cdot i$. It cannot be equal to one of them
  (otherwise it would be a suffix of $\ell$, which contradicts the
  well labelling of the initial configuration) and it cannot be
  a strict prefix either (otherwise it would be a prefix of $\ell$ too).
  Finally, $\ell \cdot i$ cannot be a prefix of $\ell'$ because $\ell$
  is not a prefix of $\ell'$.

  As far as labels are concerned, {\sc Repl} transitions are a particular
  case of {\sc Par} where there are only two sub-processes. Thus {\sc Repl} 
  preserves well labelling.
\end{proof}


\restatelemma{lem:perm}

\begin{proof}
  By symmetry it is sufficient to show one implication. Assuming $\ell_1$
  and $\ell_2$ to be independent, we consider a transition labelled
  $\alpha=\loc{\alpha'}{\ell_1}$ followed by one labelled $\beta=\loc{\beta'}{\ell_2}$.
  We first observe that a transition labelled $\ell_1$ can only generate
  new labels that are dependent with $\ell_1$. Thus, $\ell_2$ must
  be present in the original configuration and our execution is of the
  following form, where we write
  $P_\alpha$ (resp.~$P_\beta$) instead of $\loc{P_\alpha}{\ell_1}$
  (resp.~$\loc{P_\beta}{\ell_2}$):
  \[
  A = \proc{\mathcal{P}\uplus \{P_\alpha,P_\beta\}}{\Phi}\sint{\loc{\alpha}{\ell_1}}
  \proc{\p \uplus \p_\alpha \uplus \{P_\beta\}}{\Phi_\alpha}\sint{\loc{\beta}{\ell_2}}
  \proc{\p\uplus \p_\alpha \uplus \p_\beta}{\Phi_\beta}
  \]
  It remains to check that $\beta$ can be performed by $P_\beta$ in the
  original configuration,
  and that doing so would not prevent the $\alpha$ transition to happen
  next. The only thing that could prevent $\beta$ from being performed
  is that the frames $\Phi$ and $\Phi_\alpha$ may be different, in the
  case where $\alpha$ is an input. In that case, the recipe
  independence hypothesis guarantees that $\beta$ does not rely on the
  new handle introduced by $\alpha$ and can thus be played with only
  $\Phi$.
  Finally, performing $\alpha$ after $\beta$ is easy.
  We only detail the case where
  $\beta=\Out(c,w)$ and $\alpha$ is an input of recipe $M$.
  In that case we have
  $\Phi_\alpha = \Phi$, $\Phi_\beta=\Phi_\alpha\uplus\{w\mapsto m\}$,
  and $M \in \T(\dom(\Phi))$.
  We observe that $M \in \T(\dom(\Phi_\beta))$ and we construct the
   execution:
\[
A= \proc{ \mathcal{P}\uplus \{\loc{P_\alpha}{\ell_1},\loc{P_\beta}{\ell_2}\}}{\Phi}\sint{\loc{\beta}{\ell_2}}
  \proc{\mathcal{P}\uplus \p_\beta \uplus \{\loc{P_\alpha}{\ell_1}\}}{\Phi_\beta}\sint{\loc{\alpha}{\ell_1}}
  \proc{\mathcal{P}\uplus \p_\alpha \uplus \p_\beta}{\Phi_\beta}
\]
\end{proof}

\subsection{Equivalence}
\begin{definition}
  Given a process $P$, we define the set of its {\em enabled skeletons} as
  \[\enab(P)=
  \begin{cases}
    \{\sk(P)\} & \text{if $P$ starts with an observable action}\\
    \cup_i\{\sk(P_i)\} & \text{if $P=\Pi_i P_i$}\\
    \emptyset & \text{if $P=0$}\\
  \end{cases}\]
\end{definition}
We may consider skeletons, labelled skeletons and enabled skeletons of a 
configuration
by taking the set of the corresponding objects of all its processes.
\begin{property}
  For any configurations $A$, $A'$ and non-observable action $\alpha$,
  if $A\sinta{\alpha}A'$ or $A\sint{\tau}A'$ then $\enab(A)=\enab(A')$.
\label{prop:enab-stable}
\end{property}
\begin{property}
 Let $A$ be an action-deterministic configuration and $P$, $Q$ two of its processes.
 We have that $\enab(P)\cap\enab(Q)=\emptyset$.
 \label{prop:det-enab}
\end{property}

\ignore{
\subsubsection{Preliminary Definitions}
Compressed and reduced semantics, consider $\Para(S)$ and $\Zero$ actions as observable.
We thus introduce their corresponding skeletons which are those actions themselves.
We then extend the domain of definition of $\sk(\_)$ (skeletons of processes)
to all processes (not only the ones starting wit an observable actions) in the following way:
$\sk(\Pi_i P_i)=\Para(\cup_i\{\sk(P_i)\})$ and $\sk(0)=\Zero$.
\begin{definition}
  The {\em labelled skeleton} of a process $P$ (denoted by $\skl(P)$)
  is simply the skeleton of $P$ (\ie $\sk(P)$) labelled with
  the label of $P$.
\end{definition}

Remark 1. Note that given an action-deterministic configuration $A$,
we can seen $\enab(A)$ as a set or a multiset. 

Remark 2. Given an action-deterministic configuration $A$, we can see
$\sk(A)$ or $\skl(A)$ as a set or a multiset.

Remark 3. All skeletons that may appear in set $S$ in $\Para(S)$ or in set $\enab(P)$, $\enab(A)$
are necessarily skeletons of observable actions (\ie $\InS c$, $\OutS c$ and
$\BangS c$).
}






\begin{lemma}
  Let $A$ be an action-deterministic configuration.
  If $A\sint{\tr_1}A_1$ and $A\sint{\tr_2}A_2$ for some traces $\tr_1,\tr_2$ such that
  $\obs(\tr_1)=\obs(\tr_2)$ then $\enab(A_1)=\enab(A_2)$ and  $\Phi(A_1) = \Phi(A_2)$.
\label{lem:action-det-enab}
\end{lemma}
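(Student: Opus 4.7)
The plan is to prove the lemma by collapsing the $\tau$ noise through a canonical normal form. First I would observe that the two $\tau$-rules {\sc Par} and {\sc Zero} define a terminating and locally confluent rewriting on the multiset component of a configuration (they only unfold toplevel parallel compositions and discard $0$), and hence by Newman's lemma each configuration $A$ admits a unique normal form $\tau^*(A)$ consisting solely of atomic processes (input, output, test, or replication). By inspection of the semantic rules, $\tau$ transitions never touch the frame, so $\Phi(\tau^*(A)) = \Phi(A)$; and by Property~\ref{prop:enab-stable}, $\enab(\tau^*(A)) = \enab(A)$.

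The key one-step lemma I would then establish is the following: if $A$ is action-deterministic and $A \sint{\sigma\cdot\alpha} A'$ with $\sigma$ a (possibly empty) sequence of $\tau$'s and $\alpha$ an observable action, then there is a $B$ with $\tau^*(A) \sint{\alpha} B$ and $\tau^*(B) = \tau^*(A')$, and moreover this $B$ is uniquely determined by $\tau^*(A)$ and $\alpha$. Existence holds because the toplevel process firing $\alpha$ in the configuration reached after $\sigma$ is already atomic and therefore already present as a toplevel process of $\tau^*(A)$; the resulting configuration differs from $A'$ only by some leftover $\tau$-rewritings, and a direct computation shows both reach the same $\tau$-normal form. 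Uniqueness comes from action-determinism, which is preserved under reductions: at most one process of $\tau^*(A)$ can fire an observable of the given nature and channel, while the arguments of $\alpha$ (recipe, handle, or bound channel list) completely determine the effect of the rule. The {\sc Repl} case requires working up to $\alpha$-renaming of the freshly bound private names, which is standard.

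With these ingredients in place, I would conclude by induction on $n = |\obs(\tr_1)| = |\obs(\tr_2)|$ that $\tau^*(A_1) = \tau^*(A_2)$. The base case $n=0$ is immediate since both traces are then just sequences of $\tau$'s. For the step, factor each $\tr_k$ as $\rho_k \cdot \alpha \cdot \sigma_k$ where $\alpha$ is the last observable action (the same for both traces by hypothesis) and $\sigma_k$ is a block of $\tau$'s; the induction hypothesis applied to the executions $A\sint{\rho_k}$ yields equal $\tau$-normal forms of the intermediate configurations, and the one-step lemma then equates $\tau^*(A_1)$ and $\tau^*(A_2)$. From this the conclusions $\enab(A_1) = \enab(A_2)$ and $\Phi(A_1) = \Phi(A_2)$ follow by the preservation properties of $\tau^*$ established in the first step.

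The main obstacle will be the one-step lemma: action-determinism is indispensable there, both to select the unique process on which $\alpha$ acts in $\tau^*(A)$ and to guarantee that this uniqueness is maintained at every reachable intermediate state. The delicate point is that the $\tau$ actions occurring in a trace are interleaved arbitrarily with observable ones, so one must argue that they can be safely deferred---this is precisely what the $\tau^*$ canonical form encapsulates and what lets the induction go through without having to reason on concrete interleavings.
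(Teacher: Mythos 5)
Your proposal is correct and follows essentially the same route as the paper's proof: your $\tau^*$ normal forms are exactly the paper's \emph{canonical} configurations, and both arguments proceed by induction on the observable part of the trace, using action-determinism to pin down the unique process firing each observable action and the preservation of frames and enabled skeletons under non-observable steps to conclude. The only difference is presentational --- you make the normalization explicit via termination, local confluence, and Newman's lemma, where the paper instead reorders and completes the non-observable actions by hand.
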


\begin{proof}
We first prove a stronger result when the configurations $A_1$ and
$A_2$ are \emph{canonical}, \emph{i.e.} only contain processes that are neither $0$ nor a parallel
composition. Actually, in such a case, we prove that $A_1 = A_2$.

\medskip{}

To prove this intermediate result, we proceed by induction on $\obs(\tr_1)$. The base case is trivial.
Let us show the inductive case.
We assume that $\tr_1=\tr_1^0.\alpha.\tr^-_1$ with $\alpha$ an observable
action and $\tr_1^-$ containing only non-observable
actions.
Since $\obs(\tr_1)=\obs(\tr_2)$, we have that
$\tr_2=\tr_2^0.\alpha.\tr^-_2$ with $\tr_2^-$ containing only non-observable
actions and $\obs(\tr_1^0)=\obs(\tr_2^0)$.
Our given executions are thus of the form:
\[
A\sint{\tr_1^0}A^0_1\sint{\alpha.\tr_1^-}A_1 \text{ and }
A\sint{\tr_2^0}A^0_2\sint{\alpha.\tr_2^-}A_2
\]
It may be the case that $A_1^0$ or $A_2^0$ are not canonical.
The idea is to reorder some non-observable actions.
More precisely, we perform all available non-observable
actions of $A^0_1$ and $A^0_2$ before performing $\alpha$.
By doing this, we do not change the observable actions of the different sub-traces and obtain
\[
A\sint{\tr_1^0.\tr_3^-}A'^0_1\sint{\alpha.\tr'^-_1}A_1 \text{ and }
A\sint{\tr_2^0.\tr_4^-}A'^0_2\sint{\alpha.\tr'^-_2}A_2
\]
with $A'^0_1$ and $A'^0_2$ canonical.
By inductive hypothesis, we have that
$A'^0_1 = A'^0_2$.
We now must show $A_1=A_2$. By action-determinism of $A$, there is only one process $P$
that can perform $\alpha$ in $A'^0_1(=A'^0_2)$. The resulting process $P'$ after performing $\alpha$
is thus the same in the two executions. Since $A_1$ and $A_2$ are canonical and
$\tr'^-_1$ and $\tr'^-_2$ contain only non-observable actions, $A_1=A_2$.

\medskip{}

In order to be able to apply our previous result, 
we  complete the executions with all available non-observable actions:
\begin{center}
$A\sint{\tr_1}A_1\sint{\tr_1^-}A_1'$ and
$A\sint{\tr_2}A_2\sint{\tr_1^-}A_2'$ 
\end{center}
such that $A_1$ and $A_2$ are canonical
and $\tr^-_1$ and $\tr^-_2$ contain only non-observable actions.
We also have that:
\begin{itemize}
\item $\Phi(A_1)=\Phi(A'_1)$ and $\enab(A_1)=\enab(A'_1)$; and
\item  $\Phi(A_2)=\Phi(A'_2)$ and $\enab(A_2)=\enab(A'_2)$.
\end{itemize}
We now conclude thanks to our previous result, and obtain $A'_1=A'_2$ implying
the desired equalities.
\end{proof}

\begin{proposition}
\label{prop:det-eint}  
Let $A$ and $B$ be two action-deterministic configurations such that $A\eint B$.
If $A\sint{\tr_A}A'$ and $B\sint{\tr_B}B'$ with $\obs(\tr_A) =
\obs(\tr_B)$  then $\Phi(A')\estat\Phi(B')$ and
$\enab(A')=\enab(B')$.
\end{proposition}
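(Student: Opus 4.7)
The plan is to leverage the trace equivalence $A \eint B$ on $\tr_A$ itself for the frame part, and on well-chosen extensions of $\tr_A$ for the $\enab$ part. In both cases Lemma~\ref{lem:action-det-enab} will act as a ``transport'' tool: whenever we produce a second execution of $B$ that agrees with $\tr_B$ on observables, action-determinism of $B$ forces it to coincide on $\Phi$ and $\enab$ with the endpoint of $\tr_B$.

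For the frame part, I would apply the hypothesis $A \sqsubseteq B$ to $\tr_A$ (up to $\alpha$-renaming of bound channels, which always satisfies the side-condition $\bc(\tr_A) \cap \fc(B) = \emptyset$ since Ses picks fresh channels). This yields some $B \sint{\tr'} B''$ with $\obs(\tr') = \obs(\tr_A) = \obs(\tr_B)$ and $\Phi(A') \estat \Phi(B'')$. Applying Lemma~\ref{lem:action-det-enab} to $B \sint{\tr'} B''$ and $B \sint{\tr_B} B'$ gives $\Phi(B'') = \Phi(B')$, hence $\Phi(A') \estat \Phi(B')$. For the $\enab$ part, by symmetry of $\eint$ it suffices to prove $\enab(A') \subseteq \enab(B')$. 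Let $s \in \enab(A')$; since $s$ must be an observable skeleton among $\InS c$, $\OutS c$, or $\BangS a$, there is a process in $A'$ that, after at most one \textsc{Par} step (non-observable, hence preserving $\enab$ by Property~\ref{prop:enab-stable}), can fire an observable action $\alpha$ of skeleton $s$. We thus get $A \sint{\tr_A.\tr^-.\alpha} A''$ with $\tr^-$ made only of $\tau$ actions. Applying $A \sqsubseteq B$ to this extended trace, we obtain $B \sint{\sigma_1.\alpha.\sigma_2} B'''$ with $\obs(\sigma_1) = \obs(\tr_A)$ and $\obs(\sigma_2) = \epsilon$. Stopping just before $\alpha$ gives $B \sint{\sigma_1} B_1 \sint{\alpha} B_2$; a second application of Lemma~\ref{lem:action-det-enab} (to $B \sint{\sigma_1} B_1$ and $B \sint{\tr_B} B'$) yields $\enab(B_1) = \enab(B')$. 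Since $B_1$ fires $\alpha$ in a single step, the corresponding process must appear at top level in $B_1$'s multiset, so $s \in \enab(B_1) = \enab(B')$.

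The argument is essentially a careful combination of the hypothesis $A \eint B$ with Lemma~\ref{lem:action-det-enab}, and the main obstacle is simply bookkeeping rather than a genuine difficulty. One has to make sure bound channels remain fresh with respect to $B$ each time $\eint$ is invoked (handled uniformly by $\alpha$-conversion, always possible thanks to the freshness side-conditions on \textsc{Repl}), and one has to actually produce an observable action of the desired skeleton from $A'$; for inputs this requires some $M \in \T(\dom(\Phi(A')))$, which relies on the mild standing assumption that either $\dom(\Phi(A'))$ is non-empty or $\Sigma$ contains a public constant --- a hypothesis implicit throughout the paper.
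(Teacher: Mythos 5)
Your proposal follows essentially the same route as the paper's own proof: the frame part is obtained by applying $A \sqsubseteq B$ to $\tr_A$ and transporting the result onto the endpoint of $\tr_B$ via Lemma~\ref{lem:action-det-enab}, and the $\enab$ part by extending $\tr_A$ with a single observable action of the desired skeleton (after at most one \textsc{Par} step, justified by Property~\ref{prop:enab-stable}), matching the extended trace in $B$, cutting the matching execution just before that action, and transporting $\enab$ again with Lemma~\ref{lem:action-det-enab}. The only difference in the $\enab$ part is cosmetic: you cut immediately before $\alpha$ and compare $\sigma_1$ with $\tr_B$, whereas the paper compares its prefix $\tr'_0$ with $\tr_B$ and invokes Property~\ref{prop:enab-stable} on the $B$ side; both are fine. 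You also rightly flag that firing an input from $A'$ requires a derivable recipe, a point the paper's proof silently assumes.

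The one step that does not work as you state it is the handling of the side-condition $\bc(\tr_A)\cap\fc(B)=\emptyset$ by ``$\alpha$-renaming of bound channels''. Renaming the channels bound in $\tr_A$ changes its $\Ses$ actions, which are \emph{observable}; after renaming, the matching execution $B \sint{\tr'} B''$ supplied by $A \sqsubseteq B$ has observables equal to the \emph{renamed} trace, not to $\obs(\tr_B)$, so the subsequent application of Lemma~\ref{lem:action-det-enab} to $\tr'$ and $\tr_B$ (which needs $\obs(\tr')=\obs(\tr_B)$) is blocked. Note also that freshness of $\bc(\tr_A)$ with respect to $A$'s execution does not by itself give freshness with respect to $B$. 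The correct, and simpler, argument is the one the paper gives: since $\obs(\tr_A)=\obs(\tr_B)$, the two traces bind exactly the same channels, and since $B$ actually executes $\tr_B$, the \textsc{Repl} freshness conditions along that execution already force $\bc(\tr_B)\cap\fc(B)=\emptyset$; hence the side-condition for $\tr_A$ holds with no renaming at all. ($\alpha$-renaming \emph{is} the right move, but only for the freshly appended action $\alpha$ in the $\enab$ part when it is a session action: its new bound channels occur neither in $\tr_A$ nor in $A'$, and renaming them does not change the skeleton $\BangS{a}$ you are tracking.)
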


\begin{proof}
By hypothesis, we know that $A\eint B$, and also that
$A\sint{\tr_A}A'$. 
Moreover, the freshness conditions on channels
(\ie $\bc(\tr_A)\cap\fc(B)=\emptyset$)
holds as $B$ is able to perform $\tr_B$,
and $\tr_A$ and $\tr_B$ share the same bound channels.
Hence, we know that there exist $\tr'_B$ and $B''$
such that
\begin{center}
  $B \sint{\tr'_B} B''$, $\obs(\tr_A) = \obs(\tr'_B)$, and
  $\Phi(A') \estat \Phi(B'')$. 
\end{center}
Now, since $B$ is an action-deterministic configuration, applying
Lemma~\ref{lem:action-det-enab} on $\tr_B$ and $\tr'_B$,
we obtain that $\enab(B') = \enab(B'')$ and
$\Phi(B') = \Phi(B'')$. This allows us to conclude that
$\Phi(A')\estat \Phi(B')$.

It remains to show that $\enab(A') = \enab(B')$.
By symmetry, we only show one inclusion.
Let $\alpha_s \in \enab(A')$, we shall show that $\alpha_s\in\enab(B')$.
We deduce from the latter that there is a trace
$\tr'$ that is either $\alpha$ or $\tau.\alpha$ (where $\alpha$
is an observable action whose the skeleton is $\alpha_s$)
such that
$$A \sint{\tr_A} A'\sint{\tr'}A_0$$
for some $A_0$.
Since $A \eint B$, we know that there
exist $\tr'_0$, $\tr'$,  $B'_0$, and $B'$ such that
$$B \sint{\tr'_0} B'_0 \sint{\tr'} B_0$$
with $\Phi(A_0) \estat \Phi(B_0)$ and $\obs(\tr_A)= \obs(\tr_0')$
and $\obs(\tr')=\alpha$.
we have that $\alpha \in \enab(B'_0)$.
In particular, using Property~\ref{prop:enab-stable},
we have that $\alpha_s\in\enab(B'_0)$.

Now, since $B$ is an action-deterministic configuration, applying
Lemma~\ref{lem:action-det-enab} on $\tr_B$ and $\tr'_0$
we obtain $\enab(B') =
  \enab(B'_0)$, and thus $\alpha_s \in \enab(B')$.
\end{proof}

\begin{proposition} \label{prop:skfromo}
  Let $A$ be an action-deterministic configuration and $P,Q$ two
  of its processes.
  If $\enab(P)=\enab(Q)$ then $\sk(P)=\sk(Q)$.
\end{proposition}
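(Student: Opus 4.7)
The plan is to prove the proposition by case analysis on the structures of $P$ and $Q$. Since processes in configurations are in normal form with respect to internal reduction $\congr$, each process is one of: (a) $0$, (b) a parallel composition $\Pi_{i=1}^n P_i$ with $n \geq 2$ where no $P_i$ is parallel or zero, or (c) a non-parallel non-zero process, which (after normalization eliminates any top-level conditional) must start with an observable action $\In$, $\Out$, or a replication. Recall that in case~(b) the definition of $\sk(\Pi_i P_i)$ packages the $\sk(P_i)$ via the permutation $\pi$ that orders them with respect to $\ordSe$.

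First, I would handle the trivial cases. If $P = 0$ then $\enab(P) = \emptyset$; any non-zero process has non-empty $\enab$ (a singleton in case~(c), and a non-empty union in case~(b) since $n\geq 2$), so $Q = 0$ and $\sk(P) = \Zero = \sk(Q)$. If both $P$ and $Q$ are of shape~(c), then $\enab(P) = \{\sk(P)\}$ and $\enab(Q) = \{\sk(Q)\}$, and equality of these singletons directly gives $\sk(P) = \sk(Q)$.

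The interesting cases involve a parallel composition. Suppose $P$ is of shape~(c) and $Q = \Pi_j Q_j$. Then $\{\sk(P)\} = \enab(P) = \enab(Q) = \bigcup_j \{\sk(Q_j)\}$, forcing every $Q_j$ to share the same observable skeleton, say $\InS c$ (or $\OutS c$, or $\BangS c$). Applying the {\sc Par} rule on $Q$ in $A$ yields a configuration $A'$ reachable from $A$, hence action-deterministic, in which at least two of the $Q_j$ coexist as distinct elements of the multiset, each ready to perform an observable action of the same nature on the same channel $c$. This contradicts action-determinism. Thus this mixed case is impossible.

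Finally, if both $P = \Pi_i P_i$ and $Q = \Pi_j Q_j$, I would use the same action-determinism argument applied to $P$ (via {\sc Par}) to conclude that the skeletons $\sk(P_i)$ are pairwise distinct, and similarly for $\sk(Q_j)$. Consequently, the multisets $\{\!\{\sk(P_i)\}\!\}_i$ and $\{\!\{\sk(Q_j)\}\!\}_j$ coincide with the sets $\enab(P)$ and $\enab(Q)$ respectively, which are equal by hypothesis. Ordering them via $\ordSe$ (which is total on skeletons) produces the same sequence on both sides, giving $\sk(P) = \Para(\sigma_{\pi(1)};\ldots;\sigma_{\pi(n)}) = \sk(Q)$. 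The main subtlety lies in the mixed case, but it is resolved cleanly by the {\sc Par} rule being applicable without any freshness or enabling condition, together with action-determinism being stable under reachability by its very definition.
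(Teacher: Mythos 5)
Your proof is correct and takes essentially the same approach as the paper's: the key step in both is that action-determinism, applied to the configuration reached after the \textsc{Par} rule, forces the components of a parallel composition to have pairwise distinct skeletons, and both arguments conclude by matching the ordered sequences of component skeletons. The only difference is organizational---you case-split on the syntactic shapes of $P$ and $Q$ (your ``mixed case'' playing the role of the paper's singleton case), whereas the paper case-splits on the cardinality of $\enab(P)$---which changes nothing of substance.
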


\begin{proof}
Let us show that $\sk(P) = \sk(Q)$.
If $\enab(P)$ is an empty set then $P = 0$
and thus from $\enab(Q)=\emptyset$ we deduce that
$Q=0$ as well implying the required equality on skeletons.
If $\enab(P)$ is a singleton then it must be $\{\sk(P)\}$ ---
we cannot be in the case where $P$ is a parallel composition,
for in that case there would be at least two skeletons in $\enab(P)$
by action-determinism of $A$.
The same goes with $Q$ thus we have $\{\sk(P)\}=\{\sk(Q)\}$.
Finally, if $\enab(P)$ contains at least two skeletons
then it must be the case that $P$ is a parallel composition
of the form $\Pi_iP_i$ and $\enab(P) = \cup_i\{\sk(P_i)\}$.
Similarly, $Q$ must be of the form $\Pi_i Q_i$ and 
$\enab(Q) = \cup_i\{\sk(Q_i)\}$.
Here, we make use of action-determinism to obtain that the number of
subprocesses in parallel is the same as the cardinality of the sets of
skeletons, and thus the same for $P$ and $Q$:
indeed, no two parallel subprocesses can have the same skeleton.
We conclude that $\sk(P)=\Para(S)$ where
$S$ is the ordered sequence of skeletons from $\cup_i\{\sk(P_i)\}$,
and $\sk(Q)=\Para(S)$ where $S$ is the ordered sequence of skeletons
from $\cup_i\{\sk(Q_i)\} = \cup_i\{\sk(P_i)\}$.
\end{proof}


\restatelemma{lem:strong-symmetry}

\begin{proof}
We show this result by induction on the length of the derivation $A
\lrstep{\tr} A_n$. The
case where $\tr$ is empty (\ie no action even a non-observable one) is
obvious. Assume that we have proved such a result for all the executions of
length~$n$, and we want to establish the result for an execution of
length~$n+1$. 

Consider an execution of 
$\loc{\alpha_1}{\ell_1}\ldots\loc{\alpha_n}{\ell_n}$
from $A$ to $A_n$, followed by $\loc{\alpha_{n+1}}{\ell_{n+1}}$
towards $A_{n+1}$.
By induction hypothesis, we know that there exists an execution
$$B \lrstep{\loc{\alpha_1}{\ell_1}} B_1
\lrstep{\loc{\alpha_2}{\ell_2}}  \ldots
\lrstep{\loc{\alpha_n}{\ell_n}} B_n$$
such that
$\Phi(A_n) \estat \Phi(B_n)$ and
$\skl(A_i) = \skl(B_i)$ for any $1 \leq i \leq n$.
It remains to establish that there exists $B_{n+1}$ such that
$B_n$ can perform $\loc{\alpha_{n+1}}{\ell_{n+1}}$ towards
$B_{n+1}$, $\Phi(A_{n+1}) \estat \Phi(B_{n+1})$ and
$\skl(A_{n+1}) = \skl(B_{n+1})$.
We distinguish several cases depending on the action $\alpha_{n+1}$.
\bigskip{}

\noindent \emph{Case $\alpha_{n+1} = \Zero$.}
We have that $\loc{\Zero}{\ell_{n+1}} \in \skl(A_n)$, and thus,
since $\skl(A_n) = \skl(B_n)$, we
have also that
$\loc{\Zero}{\ell_{n+1}} \in \sk(B_n)$. 
We deduce that
 $A_n = \proc{\{\loc{0}{\ell_{n+1}}\} \uplus \p_0}{\Phi_0}$, and
$B_n = \proc{\{\loc{0}{\ell_{n+1}}\} \uplus \q_0}{\Psi_0}$
for some $\p_0$, $\q_0$, $\Phi_0$, and $\Psi_0$. Moreover, since
$\skl(A_n) =\skl(B_n)$, we deduce that 
$\skl(\p_0) = \skl(\q_0)$. 
Let $B_{n+1}  =
  \proc{\q_0}{\Psi_0}$. We have that:
\begin{itemize}
\item  $B_n = \proc{\{\loc{0}{\ell_{n+1}}\} \uplus \q_0}{\Psi_0}
  \lrstep{\loc{\Zero}{\ell_{n+1}}} \proc{\q_0}{\Psi_0} = B_{n+1}$,
\item $\Phi(A_{n+1}) = \Phi(A_n) \estat \Phi(B_n) = \Phi(B_{n+1})$, and 
\item $\skl(A_{n+1}) =
  \skl(\p_0) = \skl(\q_0) = \skl(B_{n+1})$. 
\end{itemize}

\medskip{}

\noindent \emph{Case $\alpha_{n+1} = \Para(S)$
  for some sequence $S =
  (\beta_1, \ldots, \beta_k)$.}
Note that this sequence is ordered according to our
order $\ordS$ over skeletons
(\ie $\beta_1\ordS\ldots\ordS\beta_k$)
and $\beta_i$'s are pairwise distinct by action-determinism of $A$.
We have $\loc{\Para(S)}{\ell_{n+1}} \in \skl(A_n)$, and thus,
since $\skl(A_n) = \skl(B_n)$, we have also that
$\loc{\Para(S)}{\ell_{n+1}} \in \skl(B_n)$.
$A_n = \proc{\{\loc{\Pi_{i=1}^{k} P_i}{\ell_{n+1}} \} \uplus \p_0}{\Phi_0}$,
$\uplus_{i=1}^k \sk(P_i)=\{\beta_1,\ldots, \beta_k\}$ and
$B_n = \proc{\{\loc{\Pi_{i=1}^{k} Q_i}{\ell_{n+1}}\} \uplus \q_0}{\Psi_0}$
for some $P_i$, $Q_i$, $\p_0$, $\q_0$, $\Phi_0$, and
$\Psi_0$.
Further, we have
$$A_{n+1} =
\proc{\uplus_{i=1}^{k}\{\loc{P_{\pi(i)}}{\ell_{n+1}i}\} \uplus \p_0}{\Phi_0}$$
for some permutation $\pi$ over $[1;k]$ such that
$\sk(P_{\pi(i)})=\beta_{i}$ for all $i$.
Moreover, since $\skl(A_n)  = \skl(B_n)$, we deduce
that $\skl(\p_0) = \skl(\q_0)$, and 
\begin{center}
$\{\sk(P_i) ~|~ 1 \leq i \leq k\}  =
\{\sk(Q_i) ~|~ 1 \leq i \leq k \} = \{\beta_1, \ldots, \beta_k\}$
\end{center}
Remark that, since $P_i$ (resp. $Q_i$) cannot be a zero or a parallel
we have that $\enab(P_i)=\{\sk(P_i)\}$ (resp. $\enab(Q_i)=\{\sk(Q_i)\}$)
and those sets are singletons.
Moreover, by action-determinism of $A$ and $B$ we know that all those singletons are
pairwise disjoint.
From this,  we conclude that there exists a permutation $\pi'$ over $[1;k]$
such that
$$
 \forall i,\ \sk(Q_{\pi'(i)})=\beta_{i}=\sk(P_{\pi(i)})
$$
and thus
$$
\forall i,\ 
\skl(\loc{Q_{\pi'(i)}}{\ell_{n+1}i}) =
\skl(\loc{P_{\pi(i)}}{\ell_{n+1}i})
$$
We can finally let $B_{n+1}$ be
$\proc{\uplus_{i=1}^{k}\{\loc{Q_{\pi'(i)}}{\ell_{n+1}i}\}\uplus\q_0}{\Psi_0}$
and we have:
\begin{itemize}
\item $B_n = \proc{\{\loc{\Pi_{i=1}^{k} Q_i}{\ell_{n+1}}\} \uplus
    \q_0}{\Psi_0} \lrstep{\loc{\Para(S)}{\ell_{n+1}}} 
\proc{\uplus_{i=1}^{k}\{\loc{Q_{\pi'(i)}}{\ell_{n+1}i}\} \uplus \q_0}{\Psi_0} = B_{n+1}$, 
\item $\Phi(A_{n+1}) = \Phi(A_n) \estat \Phi(B_n) = \Phi(B_{n+1})$, and
\item
 $\skl(A_{n+1}) = \skl(\p_0) \; \uplus\; 
  \biguplus_{i=1}^k\skl(\loc{P_{\pi(i)}}{\ell_{n+1}i})$\\
\phantom{$\skl(A_{n+1})$} $=  \skl(\q_0) \;\uplus\; 
  \biguplus_{i=1}^k\skl(\loc{Q_{\pi'(i)}}{\ell_{n+1}i}) = \skl(B_{n+1})$.
\end{itemize}
\medskip{}
 
\noindent \emph{Case $\alpha_{n+1} = \In(c,M)$ for some $c$, and $M$
  with $M \in \T(\dom(\Phi(A_n))$.}
We have that $\loc{\In_c}{\ell_{n+1}} \in \skl(A_n)$, and thus,
since $\skl(A_n) = \skl(B_n)$, we have
$\loc{\In_c}{\ell_{n+1}} \in \skl(B_n)$.
We deduce that
$A_n = \proc{\{\loc{\In(c,x_A).P}{\ell_{n+1}}\} \uplus \p_0}{\Phi_0}$, and
$B_n = \proc{\{\loc{\In(c,x_B).Q}{\ell_{n+1}}\} \uplus \q_0}{\Psi_0}$
for some $x_A$, $x_B$, $P$, $Q$, $\p_0$, $\q_0$, $\Phi_0$, and
$\Psi_0$. Since $A \approx B$ and thus $\dom(\Phi(A))=\dom(\Phi(B))$,
we have that $\dom(\Phi_0) = \dom(\Psi_0)$.
Moreover, since $\skl(A_n) = \skl(B_n)$, we deduce that 
$\skl(\p_0) = \skl(\q_0)$.
Let
$B_{n+1} = \proc{\loc{Q\{{u_B}/{x_B}\}}{\ell_{n+1}} \uplus \q_0}{\Psi_0}$
where $u_B = M\Psi_0$.
We have that:
\begin{itemize}
\item $B_n \lrstep{\loc{\In(c,M)}{\ell_{n+1}}} B_{n+1}$, and 
\item $\Phi(A_{n+1}) = \Phi(A_n) \estat \Phi(B_n) = \Phi(B_{n+1})$. 
\end{itemize}
It remains to show
that $\skl(A_{n+1}) = \skl(B_{n+1})$. Since we have that
$\skl(\p_0) = \skl(\q_0)$, and the label of the new subprocess
is the same (namely, $\ell_{n+1}$) on both sides, we only need to show 
that:
\[
\sk(P\{M\Phi_0/x_A\}) =
\sk(Q\{M\Psi_0/x_B\})
\]

In order to improve the readability, we will note
$P'=P\{M\Phi_0/x_A\}$ and $Q'=Q\{M\Psi_0/x_B\}$.
We have that $A$ and $B$ are two action-deterministic configurations
such that $A \approx B$. Moreover, they perform the same trace, respectively
towards $A_{n+1}$ and $B_{n+1}$.
Thus, thanks to Proposition~\ref{prop:det-eint}, we deduce that
$\enab(A_{n+1}) = \enab(B_{n+1})$.
Moreover, our hypothesis $\skl(\p_0) = \skl(\q_0)$
implies that $\enab(\p_0) = \enab(\q_0)$, and thus we deduce
that $\enab(P') =\enab(Q')$ (recall that by action-determinism,
unions of the form $\enab(A_{n+1})=\enab(\p_0)\cup\enab(P')$ 
are actually disjoint unions).
We conclude using Proposition~\ref{prop:skfromo}.

\medskip{}

\noindent \emph{Case $\alpha_{n+1} = \Out(c,w)$ for some $c$ and some
  $w$ with $w \not\in \dom(\Phi(A_n))$.} This case is similar to the
previous one. However, during such a step, the frame of each configuration is
enriched, and thus the fact that $\Phi(A_{n+1}) \estat
\Phi(B_{n+1})$ is now a consequence of Proposition~\ref{prop:det-eint}.

\medskip{}

\noindent \emph{Case $\alpha_{n+1} = \Ses(a,\vect{c})$ for some $a$, and
  some $\vect{c}$.}
Firstly, we show for later that $\vect c$ are fresh in $B_{n}$.
Indeed, we deduce from $\bc(\alpha_1.\ldots\alpha_{n+1})\cap\fc(B)=\emptyset$
that $\vect c$ are fresh in $B$ and we know that free channels of $B_n$
are included in $\fc(B)\cup\bc(\alpha_1.\ldots\alpha_{n})$.
Thereby, if there was a channel $c_i\in\vect c\cap\fc(B_n)$ it would
be in $\bc(\alpha_1.\ldots\alpha_{n})$ but this is forbidden because of the freshness condition 
(in the current trace) over channels,
\ie new channels cannot be introduced twice (once in $\alpha_1.\ldots\alpha_{n}$ and once in $\alpha_{n+1}$).

As before, we obtain
\[
A_n = \proc{
  \{\loc{\bang{a}{\vect{c}, \vect{n_P}}{P}}{\ell_{n+1}}\} \uplus \p_0
}{\Phi_0}
\quad\mbox{and}\quad
B_n  = \proc{
  \{\loc{\bang{a}{\vect{c}, \vect{n_Q}}{Q}}{\ell_{n+1}}\} \uplus \q_0
}{\Psi_0}
\]
for some $P$, $Q$, $\p_0$, $\q_0$, $\Phi_0$, and $\Psi_0$.
Moreover, since $\skl(A_n) = \skl(B_n)$, we deduce that
$\skl(\p_0) = \skl(\q_0)$.

We have
$$A_{n+1} = \proc{
  \{ \loc{P}{\ell_{n+1} 1},
  \loc{\bang{a}{\vect{c}, \vect{n_P}}{P}}{\ell_{n+1} 2} \}
  \uplus \p_0}{\Phi_0}$$
Accordingly, let us pose
\[
  B_{n+1} = \proc{
    \{ \loc{Q}{\ell_{n+1} 1},
       \loc{\bang{a}{\vect{c}, \vect{n_Q}}{Q}}{\ell_{n+1} 2} \}
    \uplus \q_0}{\Psi_0}
\]
We have
that: 
\begin{itemize}
\item $B_n \lrstep{\loc{\Ses(a,\vect{c})}{\ell_{n+1}}} B_{n+1}$; and
\item $\Phi(A_{n+1}) = \Phi(A_n) \estat \Phi(B_n) = \Phi(B_{n+1})$.
\end{itemize}
It remains to show that $\skl(A_{n+1}) = \skl(B_{n+1})$. Since we
have that $\skl(\p_0) = \skl(\q_0)$, and since the labels of
corresponding subprocesses are the same on both sides, we only need to show 
that:
\begin{itemize}
\item $\sk(P) = \sk(Q)$ and 
\item $\sk(\bang{a}{\vect{c}, \vect{n_P}}{P}) =
       \sk(\bang{a}{\vect{c}, \vect{n_Q}}{Q})$
\end{itemize}

As in the previous case, thanks to Proposition~\ref{prop:det-eint}, we know that
$\enab(A_{n+1}) = \enab(B_{n+1})$, and we deduce that
$\enab(P) = \enab(Q)$ and
$\enab(\bang{a}{\vect{c}, \vect{n_P}}{P}) =
       \enab(\bang{a}{\vect{c}, \vect{n_Q}}{Q})$,
which allows us to conclude using Proposition~\ref{prop:skfromo}.
\end{proof}


Finally, we can define the {\em annotated trace equivalence} and show that,
for action-deterministic configurations, it coincides with trace equivalence.

\begin{definition}
Let~$A$ and~$B$ be two configurations.
We have that $A \sqsubseteq_a B$ if,
for every $A'$ such that $A \sinta{\tr} A'$ with
$\bc(\tr)\cap\fc(B)=\emptyset$, 
then there exists $B'$ such that $B\sinta{\tr} B'$, and $\Phi\statequiv \Psi$.
They are in \emph{annotated trace equivalence}, denoted $A \einta B$, if 
$A \sqsubseteq_a B$ and $B \sqsubseteq_a A$.
\label{def:einta}
\end{definition}

\begin{lemma}
  Let $A$ and $B$ be two action-deterministic configurations such that $\skl(A) = \skl(B)$.
  \[
  A\eint B\text{ if, and only if, }A\einta B
  \]
\label{lem:eint-einta}
\end{lemma}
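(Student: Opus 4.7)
The plan is to prove the two implications separately, using the strong symmetry lemma for the forward direction and a straightforward lift-project argument between the concrete and annotated semantics for the reverse direction.

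For the forward direction ($\eint \Rightarrow \einta$), I intend to apply Lemma~\ref{lem:strong-symmetry} essentially off the shelf. Assuming $A \eint B$ with $\skl(A) = \skl(B)$, given any annotated execution $A \sinta{\tr} A'$ with $\bc(\tr)\cap \fc(B) = \emptyset$, strong symmetry directly yields $B \sinta{\tr} B'$ with $\Phi(A') \estat \Phi(B')$ (and even preserves labelled skeletons all along). This establishes $A \sqsubseteq_a B$. A symmetric argument, using the fact that action-determinism and the skeleton equality hypothesis are symmetric in $A$ and $B$, gives $B \sqsubseteq_a A$.

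For the reverse direction ($\einta \Rightarrow \eint$), the key observation is that the annotated semantics is just a decoration of the concrete one: every concrete transition lifts uniquely (up to well-labelling choices fixed by the starting configuration and the canonical permutation in \textsc{Par}) to an annotated transition, and conversely every annotated transition projects to a concrete one by stripping labels and collapsing $\Para(\ldots)$ and $\Zero$ to $\tau$. I would first establish this correspondence formally as a small preliminary step: a concrete execution $A \sint{\tr} A'$ lifts to $A \sinta{\tr_a} A'$ where the observable actions of $\tr_a$ match those of $\tr$ modulo labels, and conversely an annotated execution $B \sinta{\tr_a} B'$ projects to $B \sint{\tr_c} B'$ with $\obs(\tr_c) = \obs(\tr)$ whenever $\tr_a$ was obtained from $\tr$ in this way. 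Then given $A \sint{\tr} A'$ with the usual freshness, I lift to $A \sinta{\tr_a} A'$, apply $A \einta B$ to obtain $B \sinta{\tr_a} B'$ with $\Phi(A') \estat \Phi(B')$, and project back to a concrete $B \sint{\tr_c} B'$ satisfying $\obs(\tr) = \obs(\tr_c)$, which is exactly what $A \sqsubseteq B$ requires. The symmetric argument gives $B \sqsubseteq A$.

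The main obstacle is really just the forward direction, which is entirely resolved by Lemma~\ref{lem:strong-symmetry}; the reverse direction is largely bookkeeping once the lift-project correspondence is spelled out. Care must be taken, however, that the freshness side-condition $\bc(\tr_a)\cap\fc(B) = \emptyset$ transfers correctly under lifting (which it does since lifting does not alter bound channels), and that the order chosen in the \textsc{Par} rule is consistent between the two configurations being compared — but this is guaranteed by the hypothesis $\skl(A) = \skl(B)$, since it ensures that parallel subprocesses appear in the same skeleton order on both sides. No new permutation arguments or reorderings (such as those needed in Lemma~\ref{thm:reach-comp}) are required here.
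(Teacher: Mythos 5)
Your proof is correct and is essentially the paper's: the paper likewise treats $\einta \Rightarrow \eint$ as trivial (your lift-project argument merely spells this out) and settles the hard direction $\eint \Rightarrow \einta$ by a direct appeal to Lemma~\ref{lem:strong-symmetry}, exactly as you do. The only quibble is your closing remark about the \textsc{Par} ordering: its consistency is not something $\skl(A)=\skl(B)$ needs to guarantee, but a non-issue, since the order is fixed canonically by the total order on skeletons and $\einta$ already forces $B$ to match the $\Para$ actions verbatim.
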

\begin{proof}
Firstly, $A \einta B$ trivially implies $A\eint B$.
For the other direction, we use Lemma~\ref{lem:strong-symmetry} to conclude.
\end{proof}

\section{Compression}
\label{sec:app:compression}

\subsection{Reachability}

\restatelemma{thm:reach-comp}

\begin{proof}
Let $A = \proc{\mathcal{P}}{\Phi}$ be a configuration and
 $\proc{\mathcal{P}}{\Phi}\sinta{\tr}A'$ a complete execution. 
We proceed by induction on the length of $\tr$, distinguishing two cases.

\smallskip{}

\noindent\emph{Case 1.} We first consider the case where there is
at least one process in $\mathcal{P}$ that is negative and 
non-replicated. Since we are considering a complete execution,
at least one negative action $\alpha$ is performed on this process in $\tr$.
This action may be an output, the decomposition of a parallel composition,
or the removal of a zero.
If there are more than one such action, we choose the one that can be performed
using {\sc Neg}, \ie the one that is minimal according to our arbitrary order on labelled skeletons.
Since our action can be performed initially by our process, and by 
well-labelling, the label of the action is independent with all labels of
previously executed actions in $\tr$.
Moreover, there cannot be any second-order dependency between $\alpha$
and one of those actions. Indeed,  if $\alpha$ is an output, no input
performed before $\alpha$ is able to use the handle of $\alpha$.
It can thus be swapped before all the others by using Lemma~\ref{lem:perm},
obtaining an execution of 
trace $\alpha.\tr'$ ending in the same configuration $A'$.
The rule {\sc Neg} can be performed in the compressed 
semantics to trigger the action $\alpha$, and by induction hypothesis on $\tr'$ we can complete our 
compressed execution towards $A'$.

\smallskip{}

\noindent\emph{Case 2.} Otherwise, when $\mathcal{P}$ contains only positive or replicated processes,
we must choose one  process to focus on, start a positive 
phase and execute all its actions until we can finally release the focus.
As long as all  processes are positive or replicated, only input or 
session actions can be performed. In either case, the action yields a new 
process (the continuation of the input, or the new session)
which may be negative or positive.
We define the \emph{positive prefix} of our execution as the prefix of
actions for which all but the last transition yield a positive 
process.
It is guaranteed to exist because $A'$ contains only negative
processes.

The positive prefix is composed only of input and replication actions.
Because session actions are performed by negative processes, and no new
negative  process is created in the positive prefix, session actions can
be permuted at the beginning of the prefix thanks to Lemma~\ref{lem:perm}.
Thus, we assume without loss of
generality that the prefix is composed of session actions, followed by input
actions: we write $\tr = \tr_{!} . \tr_\mathrm{in} . \tr_0$.
In the portion of the execution where inputs of $\tr_\mathrm{in}$ are performed,
there is an obvious bijective mapping between the processes of any 
configuration and its successor, allowing us to follow execution threads,
and to freely permute inputs pertaining to different threads.
Such permutations are made possible by Lemma~\ref{lem:perm}.
Indeed,
they concern actions that are (i) sequentially independent (\ie labels are independent)
since two different threads involve actions performed by processes in parallel
and (ii) recipe independent since there is no output action in $\tr_{\mathrm{in}}$.

The last action of the positive prefix releases a negative process
$P^{-}$. Let $P$ be its antecedent (trough its corresponding thread)
in the configuration obtained after the
execution of $\tr_{!}$. We have that:
\[ A = \proc{\mathcal{P}}{\Phi}
  \sinta{\tr_{!}} \proc{\mathcal{P}_1\uplus\{P\}}{\Phi}
  \sinta{\tr_\mathrm{in}} \proc{\mathcal{P}_2\uplus\{P^{-}\}}{\Phi}
  \sinta{\tr_0} A' \]
Now we can write $\mathcal{P} = \mathcal{P}_0 \uplus \{ P_\mathsf{f} \}$
where $P_\mathsf{f}$ is either $P$ or a replicated  process that gives rise
to $P$ in one transition. By permuting actions pertaining to $P_\mathsf{f}$
before all others thanks to Lemma~\ref{lem:perm} and previous remarks,
 we obtain an execution of the form
\[ A = \proc{\mathcal{P}_0 \uplus \{ P_\mathsf{f} \}}{\Phi}
  \sinta{\tr_1} \proc{\mathcal{P}'_0\uplus\{P^{-}\}}{\Phi}
\sinta{\tr_2} \proc{\mathcal{P}_2\uplus\{P^{-}\}}{\Phi} \sinta{\tr_0} A' \]
where $\tr_1 . \tr_2 . \tr_0$ is a permutation of $\tr$ of
independent labelled actions,
$\tr_1 =\loc{\alpha}{\ell}.\tr'_1$,
and
$\p'_0 = \mathcal{P}_0$ when $P_\mathsf{f} = P$,
and $\p'_0 = \mathcal{P}_0 \uplus \{P_\mathsf{f}\}$ otherwise.

In the compressed semantics, if we initiate a focus on $P_\mathsf{f}$ we can execute
the actions of $\tr_1$ and release the focus when reaching $P^{-}$, \ie
we have that (where $\ell'$ is the label of $P^{-}$):
\[ \procc{\p}{\wfoc}{\Phi} \sintc{\loc{\Foc(\alpha)}{\ell}} \sintc{\tr'_1}
\procc{\p'_0}{P^{-}}{\Phi} \sintc{\loc{\Rel}{\ell'}} \procc{\p'_0\uplus\{P^{-}\}}{\wfoc}{\Phi}. \]
We can conclude by induction hypothesis on $\tr_2 . \tr_0$.
\end{proof}

\subsection{Equivalence}
We prove below the two implications of Theorem~\ref{thm:eintc},
dealing first with soundness and then
with the more involved completeness result.

\begin{lemma}[Soundness]
  Let $A$ and $B$ be action-deterministic 
  configurations such that $\skl(A) = \skl(B)$.
  We have that $A
  \eint B$ implies $\foc{A} \eintc \foc{B}$.
\label{lem:eintc-sound}
\end{lemma}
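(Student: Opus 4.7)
The plan is to prove $\foc{A} \sqsubseteq_c \foc{B}$; the reverse inclusion follows by symmetry. So I fix a compressed execution $\foc{A} \sintc{\tr} A'$ with $\bc(\tr) \cap \fc(B) = \emptyset$, and I aim to construct $B''$ with $\foc{B} \sintc{\tr} B''$ and $\Phi(A') \statequiv \Phi(B'')$.

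First, Lemma~\ref{lem:reach-sound} translates the compressed execution into an annotated one, namely $A \sinta{\defoc{\tr}} \defoc{A'}$. The action-determinism of $A$ and $B$, together with $\skl(A) = \skl(B)$ and the hypothesis $A \eint B$, let me invoke Lemma~\ref{lem:eint-einta} to obtain $A \einta B$. Applying Lemma~\ref{lem:strong-symmetry} then yields an annotated execution $B \sinta{\defoc{\tr}} B'$ along which the labelled skeletons of the two sides coincide at every intermediate configuration, and with $\Phi(\defoc{A'}) \estat \Phi(B')$.

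The main obstacle is to lift $B$'s annotated execution back to a compressed execution of $\tr$ itself, reconstructing the $\Foc$ and $\Rel$ annotations that $\defoc{\cdot}$ has erased. I would proceed by induction on the length of $\tr$, maintaining as invariant that the underlying multiset of the current enriched $B$-configuration matches the intermediate configuration of $B \sinta{\defoc{\tr}} B'$ at the corresponding step, and that its focus slot mirrors the one prescribed by $\tr$. The key observation is that which compressed rule to apply next is dictated entirely by $\tr$ together with the labelled skeletons, which stay in sync with those of $A$ by the strong symmetry argument: a $\Foc$ marker in $\tr$ requires the configuration to be initial, which holds for $B$ iff it holds for $A$ since being initial depends only on skeletons; the process to put under focus and each subsequent positive action are identified by their labels, and action-determinism of $B$ guarantees the uniqueness of the matching process; a $\Rel$ marker can be fired exactly when the focused process has turned negative, which is again readable from its skeleton; and the intermediate {\sc Neg} steps are fully deterministic from the skeletons of the multiset. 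The induction thus produces $\foc{B} \sintc{\tr} B''$ with $\defoc{B''} = B'$, and we conclude with $\Phi(A') = \Phi(\defoc{A'}) \estat \Phi(B') = \Phi(B'')$, establishing $\foc{A} \sqsubseteq_c \foc{B}$.
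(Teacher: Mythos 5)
Your proof is correct and takes essentially the same route as the paper's: Lemma~\ref{lem:reach-sound} to move to the annotated semantics, Lemma~\ref{lem:strong-symmetry} to obtain a matching annotated execution of $B$ with synchronised labelled skeletons and statically equivalent frames, and then a skeleton- and label-driven reconstruction of the compressed structure (initiality for $\Foc$, negativity for $\Rel$, determinism of {\sc Neg}, uniqueness of the focused process by action-determinism). The only cosmetic difference is your detour through Lemma~\ref{lem:eint-einta}, which is unnecessary since Lemma~\ref{lem:strong-symmetry} applies directly to the hypothesis $A \eint B$.
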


\begin{proof}
  By symmetry it suffices to show $\foc{A} \sqsubseteq_c \foc{B}$. Consider an execution 
  $\foc{A} \sintc{\tr}A'$ such that $\bc(\tr)\cap\fc(B)=\emptyset$.
  Thanks to Lemma~\ref{lem:reach-sound}, we know that $A \sinta{\defoc{\tr}}
  \defoc{A'}$. Let $\defoc{\tr} = \loc{\alpha_1}{\ell_1} \ldots \loc{\alpha_n}{\ell_n}$,
  and we denote $A_1, \ldots, A_n$ the intermediate configurations
  that are reached during this execution. We have that:
  \[
  A_0 = A \sinta{\loc{\alpha_1}{\ell_1}} A_1 \sinta{\loc{\alpha_2}{\ell_2}}
  \ldots \sinta{\loc{\alpha_n}{\ell_n}} A_n = \proc{\p}{\Phi_A}.
  \]

  Applying Lemma~\ref{lem:strong-symmetry}, we
  deduce that $B$ can perform a very similar execution (same labels,
  same actions), \ie
  \[
  B_0 = B \sinta{\loc{\alpha_1}{\ell_1}} B_1 \sinta{\loc{\alpha_2}{\ell_2}}
  \ldots \sinta{\loc{\alpha_n}{\ell_n}} B_n = \proc{\q}{\Phi_B}.
  \]
  with $\Phi(A_i) \estat \Phi(B_i)$ and $\skl(A_i) = \skl(B_i)$ for $0 \leq i\leq n$.

  Due to this strong symmetry, we are sure that $\foc{B}$ will be able
  to do this execution in the compressed semantics. In particular, the
  fact that a given configuration $B_i$ can start a positive phase or has to release
  the focus is determined by the set $\skl(B_i)=\skl(A_i)$ and the fact that it can keep the focus
  on a specific process while performing positive actions can be deduce from labels of $\tr$.
  Finally, we have shown that if $A_i$ can execute an action $\alpha$ using {\sc Neg} rule
  then $B_i$ can as well.
  The only missing part is about the fact that {\sc Neg} has been made non-branching using
  an arbitrary order on labelled skeletons.
  Let say we can use {\sc Neg} only for actions
  whose skeleton is minimal among others skeletons of available, negative actions.
  Using $\skl(A_i)=\skl(B_i)$, we easily show that this is symmetric for $A_i$
  and $B_i$.
  This way, we obtain an execution $\foc{B}\sintc{\tr}B'$ with
  $\defoc{B'}=B_n$.
  Finally, we have $\Phi(B')=\Phi_B\estat\Phi_A=\Phi(A')$.
\end{proof}


\begin{lemma}
  Let $A$ and $B$ be two action-deterministic configurations such that
  $A\eintc B$.  If $A\sintc{\tr}A'$ and $B\sintc{\tr}B'$ for a
  labelled trace $\tr$ then $A'\eintc B'$.
  \label{lem:onestepc}
\end{lemma}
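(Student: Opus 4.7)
The plan is to establish both inclusions $A' \sqsubseteq_c B'$ and $B' \sqsubseteq_c A'$; since the hypotheses are symmetric in $A$ and $B$, I will detail only the first one. The strategy is to pick an arbitrary compressed execution $A' \sintc{\tr'} A''$ with $\bc(\tr') \cap \fc(B') = \varnothing$, concatenate it with the given $A \sintc{\tr} A'$ to obtain $A \sintc{\tr.\tr'} A''$, and then invoke the inclusion $A \sqsubseteq_c B$ to match it with an execution from $B$. Before doing so, I would check the freshness side-condition $\bc(\tr.\tr') \cap \fc(B) = \varnothing$: the part for $\tr$ is automatic since the {\sc Repl} steps in $B \sintc{\tr} B'$ only introduce channels that are fresh for $B$, and the part for $\tr'$ follows from $\bc(\tr') \cap \fc(B') = \varnothing$ together with $\fc(B) \subseteq \fc(B')$ (execution never removes free channels). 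This yields some $B''$ with $B \sintc{\tr.\tr'} B''$ and $\Phi(A'') \statequiv \Phi(B'')$.

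The next step is to factor that combined execution through $B'$, by splitting it as $B \sintc{\tr} B_1 \sintc{\tr'} B''$ and arguing that $B_1 = B'$. I would establish, as a separate determinism lemma, that the compressed labelled semantics is deterministic on action-deterministic configurations, up to the irrelevant choice of private names $\vect{n}$ generated by {\sc Repl}: labels designate the active process at each step; recipes, handles, and public channels $\vect{c}$ are part of the actions themselves; {\sc Neg} is non-branching by construction; and action-determinism prevents two processes sharing a skeleton on the same channel from being chosen differently in the two executions. Since the private names only populate continuations of processes and do not appear in the frame, any such alpha-renaming between $B_1$ and $B'$ is transparent both to the static equivalence of frames and to the executability of $\tr'$ afterwards. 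Hence $B' \sintc{\tr'} B''$ with $\Phi(A'') \statequiv \Phi(B'')$, which is exactly the witness required. Swapping the roles of $A$ and $B$ throughout yields $B' \sqsubseteq_c A'$.

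The hard part will be the determinism claim used in the factoring step, which has to be stated precisely enough to identify $B_1$ with $B'$ on the nose (or up to a benign renaming of private names) and to guarantee that $\tr'$ is executable from $B'$ whenever it is from $B_1$. I would prove it by a rule-by-rule induction on the length of $\tr$, using well-labelling (Proposition~\ref{prop:wlabel-stable}) to track which process is active at each step, and action-determinism to discard all other competing candidates. Once this lemma is in hand, the rest of the argument is routine bookkeeping on labels and on the freshness conditions built into the compressed rules.
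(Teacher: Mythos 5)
Your overall route is exactly the paper's: concatenate the given execution with the new one, check the freshness side-condition so that $A \sqsubseteq_c B$ can be applied to $\tr.\tr'$, and then factor the resulting execution of $B$ through $B'$ by uniqueness of labelled compressed executions. The factoring step is sound, and your treatment of it is in fact \emph{more} careful than the paper's: the paper only notes that at most one process of $B$ carries the label of each action, whereas you isolate the one genuine source of non-determinism --- the fresh names $\vect{n}$ of \textsc{Repl} --- and correctly argue that they never occur in actions and affect frames only up to a bijective renaming of names, hence harmlessly for static equivalence; you also correctly invoke action-determinism, which is what makes the permutation in \textsc{Par} unique.

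The gap is in the freshness step, and it is a real one: both of your sub-claims are false, for the same underlying reason. Free channels \emph{can} be removed by execution: a process that is entirely consumed takes its channels with it, e.g. $(\{\In(c,x).0\};\wfoc;\Phi) \sintc{\Foc(\In(c,M)).\Rel.\Zero} (\emptyset;\wfoc;\Phi)$ removes $c$, so $\fc(B)\subseteq\fc(B')$ fails. Moreover, since the freshness condition of \textsc{Repl} is relative to the \emph{current} configuration rather than to the initial one, such a consumed channel may later be re-bound, so your claim that $\bc(\tr)\cap\fc(B)=\emptyset$ is ``automatic'' is also wrong. This actually breaks the proof: with $A=B=(\{\In(d,x).\In(c,y).0,\ \bang{e}{f,m}{Q}\};\wfoc;\Phi_0)$, take $\tr$ to be the block consuming the first process and $\tr'$ a trace beginning with $\Foc(\Ses(e,c))$, which is legal from $A'=B'$ because $c$ is no longer free there; then $\bc(\tr')\cap\fc(B')=\emptyset$ yet $c\in\bc(\tr')\cap\fc(B)$, so the hypothesis $A\sqsubseteq_c B$ simply cannot be invoked on $\tr.\tr'$ as you do. The repair is a renaming detour: rename the bound channels of $\tr'$ (and, if needed, of $\tr$) to channels fresh for everything, apply $A\sqsubseteq_c B$ and your factoring argument to the renamed trace, then undo the renaming --- this is legitimate because the semantics is equivariant under bijective channel renamings, because $\bc(\tr')\cap\fc(B')=\emptyset$ makes the renaming reversible on $B'$'s side, and because frames contain no channels, so static equivalence is untouched. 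To be fair, the paper's own proof has the very same hole: it asserts both that $B\sintc{\tr}B'$ forces $\bc(\tr)\cap\fc(B)=\emptyset$ and that every channel of $\fc(B)\setminus\fc(B')$ occurs bound in $\tr$; the second assertion is refuted by the example above (and is even incompatible with the first unless $\fc(B)\subseteq\fc(B')$). So your proposal matches the published proof both in structure and in its weak point; replacing the two false freshness claims by the renaming argument would make it fully rigorous.
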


\begin{proof}
  We assume  $A\eintc B$, $A\sintc{\tr}A'$ and $B\sintc{\tr}B'$ for a labelled trace
  $\tr=\alpha_1.\ldots\alpha_n$. We shall prove $A'\eintc B'$.
  By symmetry, we show one inclusion. Consider an execution
  $A'\sintc{\tr_2}A_2$ such that $\bc(\tr_2)\cap\fc(B')=\emptyset$. Let us construct
  an execution $B'\sintc{\tr_2}B_2$ such that $\Phi(A_2)\estat\Phi(B_2)$.
  Firstly, remark that since $B\sintc\tr B'$,
  we have that $\bc(\tr)\cap\fc(B)=\emptyset$.
  In order to exploit our hypothesis $A\eintc B$, we shall prove that
  $\bc(\tr.\tr_2)\cap\fc(B)=\emptyset$.
  All that remains to show is   $\bc(\tr_2)\cap(\fc(B)\backslash\fc(B'))=\emptyset$.
  This is implied by the fact that channels in $\fc(B)\backslash\fc(B')$
  must occur in $\tr$ as bound channels and, because of the execution
  $A\sintc{\tr.\tr_2}A_2$, those channels cannot appear in the set $\bc(\tr_2)$.

  We have that $A\sintc{\tr.\tr_2}A_2$ and
  thus by hypothesis, $B\sintc{\tr.\tr_2}B_2$ with
  $\Phi(B_2)\estat\Phi(A_2)$.
  Since there can be at most one process in $B$
  that has a label that matches the label of $\alpha_1$,
  there is at most one configuration $B_1$ that satisfies
  $B\sintc{\alpha_1}B_1$. By iteration, we obtain the unicity
  of the configuration $B'$ satisfying $B\sintc{\tr}B'$.
  We thus have obtained the required execution
  $B'\sintc{\tr_2}B_2$.
\end{proof}

\begin{lemma} 
  Let $A$ and $B$ be two action-deterministic configurations.  If
  for any complete execution $A\sinta{\tr}A'$ with $\bc(\tr)\cap\fc(B)=\emptyset$,
  there exists a trace $\tr'$ and an execution $B\sinta{\tr'}B'$ such
  that $\Phi(A')\estat\Phi(B')$, then
  $A\sqsubseteq_a B$.
  \label{lem:completion}
\end{lemma}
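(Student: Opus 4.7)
The plan is to reduce arbitrary executions to complete ones by completion. Given $A \sinta{\tr} A_1$ with $\bc(\tr)\cap\fc(B)=\emptyset$, I would first extend it to a complete execution $A \sinta{\tr.\tr_c} A_2$ by firing enabled non-\textsc{Repl} transitions until none remain; this terminates because non-\textsc{Repl} transitions are well-founded on finite syntactic processes. Bound channels introduced along $\tr_c$ may be picked fresh so as to preserve the disjointness condition $\bc(\tr.\tr_c)\cap\fc(B)=\emptyset$.

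Applying the hypothesis to this complete execution yields a matching execution $B \sinta{\tr.\tr_c} B_2$ with $\Phi(A_2) \estat \Phi(B_2)$. I read the existential in the hypothesis as providing the same labelled trace $\tr.\tr_c$: by action-determinism of $B$, each labelled action in the trace prescribes a unique process of $B$ that must execute it, and for inputs the static equivalence between intermediate frames forces compatible recipes, so any execution of $B$ terminating at a frame statically equivalent to $\Phi(A_2)$ can be realigned to use exactly this trace. Letting $B_1$ denote the configuration reached after $B$'s first $|\tr|$ actions, I obtain $B \sinta{\tr} B_1$ with $B_1 \sinta{\tr_c} B_2$.

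To conclude $\Phi(A_1) \estat \Phi(B_1)$, I would observe that $\Phi(A_1)$ and $\Phi(B_1)$ are the restrictions of $\Phi(A_2)$ and $\Phi(B_2)$ respectively to the handles introduced during $\tr$: since static equivalence is preserved when restricting two equivalent frames to a common subdomain — any recipe over the smaller domain is also a recipe over the larger, with the same interpretation — we recover $\Phi(A_1) \estat \Phi(B_1)$. The main obstacle is the realignment step in the previous paragraph, which turns the hypothesis's existential over $B$'s trace into a trace matching $\tr.\tr_c$ exactly; action-determinism combined with the informativeness of the annotated labelled semantics (labels together with detailed non-observable actions) does the heavy lifting there, in the same spirit as Lemma~\ref{lem:strong-symmetry}.
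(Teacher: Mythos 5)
Your overall skeleton is the paper's own: complete the given execution $A\sinta{\tr}A_1$ into a complete one $A\sinta{\tr.\tr_c}A_2$ (choosing bound channels fresh for $B$ as well), apply the hypothesis to the completed trace, cut $B$'s execution after the first $|\tr|$ actions to get $B\sinta{\tr}B_1$, and recover $\Phi(A_1)\estat\Phi(B_1)$ by restriction. The completion argument (termination of non-\textsc{Repl} steps) and the restriction argument are both correct and match the paper, which phrases the latter contrapositively: any test over $\dom(\Phi(B_1))$ is a test over $\dom(\Phi(B_2))$ with the same value, since frames only grow along an execution and the two prefixes introduce the same handles.

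The realignment paragraph, however, is a genuine error, and it is precisely the step you call ``the main obstacle''. The claim that action-determinism plus static equivalence of the \emph{final} frames lets you convert an arbitrary execution $B\sinta{\tr'}B_2$ with $\Phi(A_2)\estat\Phi(B_2)$ into one over exactly the labelled trace $\tr.\tr_c$ is false: action-determinism gives \emph{uniqueness} of the process of $B$ able to perform a given labelled action, not \emph{existence}, and under the literal reading of the hypothesis you know nothing about $B$'s intermediate frames, so the ``compatible recipes'' part has no footing either. Concretely, take $A=\proc{\{\loc{\Out(c,n).0}{\ell}\}}{\emptyset}$ and $B=\proc{\{\loc{\Out(d,n).0}{\ell}\}}{\emptyset}$ with $c\neq d$: every complete execution of $A$ has trace $\loc{\Out(c,w)}{\ell}.\loc{\Zero}{\ell}$ and reaches the frame $\{w\mapsto n\}$, which $B$ also reaches via $\loc{\Out(d,w)}{\ell}.\loc{\Zero}{\ell}$, so the hypothesis read with an arbitrary $\tr'$ holds (in both directions); yet $B$ can never perform $\loc{\Out(c,w)}{\ell}$, hence $A\not\sqsubseteq_a B$. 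In other words, with the literal ``some $\tr'$'' reading the lemma itself is false, and no realignment argument can rescue it. The statement is meant to be read --- and is so used, both in the paper's proof (which invokes the hypothesis as giving $B\sinta{\tr_0.\tr_1}B'$ for the completed trace itself) and in its application inside the completeness proof of Lemma~\ref{lem:eintc-complete} --- with $\tr'$ being the \emph{same} trace as $A$'s completed trace. Under that reading your realignment step should simply be deleted, and the remainder of your proof is correct and coincides with the paper's.
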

\begin{proof}
  Let $A\sinta{\tr_0}A_0$ be an execution of $A$ with $\bc(\tr_0)\cap\fc(B)=\emptyset$.
  Firstly, we complete the latter execution in an arbitrary way $A\sinta{\tr_0.\tr_1}A'$
  such that any process of $A'$ is replicated and $\bc(\tr_0.\tr_1)\cap\fc(B)=\emptyset$
  (it suffices to choose fresh channels for $B$ as well).
  By hypothesis, there exists an execution 
  $B\sinta{\tr_0.\tr_1}B'$ such that  $\Phi(A')\estat\Phi(B')$.
  The latter execution of $B$ is thus of the form
  $B\sinta{\tr_0}B_0\sinta{\tr_1}B'$.
  It remains to show that $\Phi(A_0)\estat\Phi(B_0)$.
  For the sake of contradiction, we assume that $\Phi(B_0)\estat\Phi(A_0)$
  does not hold.  In other words, there is a test of equality over
  $\dom(\Phi(B_0))$ that holds for $\Phi(A_0)$ and does not for
  $\Phi(B_0)$ (or the converse). 
  Since $\dom(\Phi(B_0))\subseteq\dom(\Phi(B'))=\dom(\Phi(A'))$,
  this same test can be used to
  conclude that $\Phi(A')\estat\Phi(B')$ does not hold as well
  leading to an absurd conclusion.
\ignore{      
 We now distinguish two
  cases for the given witness: (i) there is no execution for $B$ with the same
  observable actions as $\tr_0$ at all (ii) or
  for all such execution, $\Phi(A_0)\estat\Phi(B_0)$ does not hold.
  In case (i): there is no execution for $B$ with the same
  observable action as $\tr_0.\tr_1$ as well.  In case
  (ii): consider a configuration $B'$ such that
  $B\sinta{\tr_0'.\tr_1'}B'$ with $\obs(\tr_0.\tr_1)=\obs(\tr_0'.\tr_1')$.
  If $B_0$ is such that
  \[B\sinta{\tr_0'}B_0\sinta{\tr_1'}B'\] then $\Phi(B_0)\estat\Phi(A_0)$
  does not hold.  In other words, there is a test of equality over
  $\dom(\Phi(B_0))$ that holds for $\Phi(A_0)$ and does not for
  $\Phi(B_0)$ (or the converse).  This same test can be used to
  conclude that $\Phi(A')\estat\Phi(B')$ does not hold as well.
}
\end{proof}

\begin{lemma}[Completeness]
\label{proposition:equiv-complete-compr}
Let $A$ and $B$ be two action-deterministic configurations satisfying
$\skl(A)=\skl(B)$.
Then $\foc A \eintc \foc B$ implies $A \eint B$.
\label{lem:eintc-complete}
\end{lemma}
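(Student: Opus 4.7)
The plan is to first reduce trace equivalence to annotated trace equivalence by invoking Lemma~\ref{lem:eint-einta} (which applies thanks to the action-determinism of $A$ and $B$ and the hypothesis $\skl(A)=\skl(B)$). By symmetry, it then suffices to establish $A \sqsubseteq_a B$, and by Lemma~\ref{lem:completion} it is enough to handle the case of a \emph{complete} execution $A \sinta{\tr} A'$ with $\bc(\tr)\cap\fc(B)=\emptyset$: we need to produce $B \sinta{\tr} B'$ with $\Phi(A') \estat \Phi(B')$.

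From such a complete annotated execution, I would apply Lemma~\ref{thm:reach-comp} to obtain a compressed trace $\tr_c$ and an execution $\foc{A} \sintc{\tr_c} \foc{A'}$ such that $\defoc{\tr_c}$ is a permutation of $\tr$ by swaps of independent labelled actions. Since the bound channels of $\tr_c$ coincide with those of $\defoc{\tr_c} = \tr$ up to reordering, the freshness condition $\bc(\tr_c)\cap\fc(\foc{B})=\emptyset$ is inherited from the assumption on $\tr$. The hypothesis $\foc{A}\eintc\foc{B}$ then supplies a configuration $B_c$ with $\foc{B} \sintc{\tr_c} B_c$ and $\Phi(\foc{A'}) \estat \Phi(B_c)$.

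To finish, I would use Lemma~\ref{lem:reach-sound} to map this compressed execution back to the annotated world, yielding $B \sinta{\defoc{\tr_c}} \defoc{B_c}$, and then iteratively apply Lemma~\ref{lem:perm} to undo the independent swaps and recover exactly the trace $\tr$, so $B \sinta{\tr} \defoc{B_c}$. Since $\Phi(\defoc{B_c}) = \Phi(B_c)$ and $\Phi(\foc{A'}) = \Phi(A')$, static equivalence of the two final frames is exactly what is needed.

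The main technical care is in bookkeeping: verifying that the sequence of swaps given by Lemma~\ref{thm:reach-comp} can indeed be inverted by Lemma~\ref{lem:perm} on the $B$ side (each intermediate trace produced during the permutation must remain executable with respect to the freshness conditions), and that the channel freshness assumptions propagate correctly through the completion step introduced by Lemma~\ref{lem:completion}. The latter is addressed by choosing the completing actions with channels fresh for both $A$ and $B$. With these invariants in place, the chain of reductions goes through essentially by composition of lemmas already established.
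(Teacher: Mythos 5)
Your proposal is correct and follows essentially the same route as the paper's own proof: reduce to annotated trace equivalence via Lemma~\ref{lem:eint-einta}, restrict to complete executions via Lemma~\ref{lem:completion}, lift to the compressed semantics with Lemma~\ref{thm:reach-comp}, invoke the hypothesis $\foc{A}\eintc\foc{B}$, and map back down using Lemma~\ref{lem:reach-sound} followed by Lemma~\ref{lem:perm} to undo the independent swaps. The bookkeeping points you flag (freshness propagation through the completion step, invertibility of the swaps on the $B$ side) are exactly the details the paper's argument relies on as well.
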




\begin{proof}
  Assume $\defoc A\eintc \defoc B$,
  thanks to Lemma~\ref{lem:eint-einta}, it suffices to show
  $A\einta B$.
  Let us show the following intermediate result:
  {\em for any complete execution $A \sinta{\tr} A'$ such that
    $\bc(\tr)\cap\fc(B)=\emptyset$, there is an
    execution $B \sinta{\tr} B'$ such that  $\Phi(A') \statequiv \Phi(B')$.}
  Thanks to Lemma~\ref{lem:completion} and by symmetry of $\einta$,
  this intermediate result implies the required conclusion $A \einta B$.

  Let $A\sinta{\tr}A'$ be a complete execution with $\bc(\tr)\cap\fc(B)=\emptyset$.
  We thus have that $A'$ is initial.
  Applying the Lemma~\ref{thm:reach-comp}, we obtain
  a trace $\tr_c$ such that $\foc A  \sintc{\tr_c} \foc{A'}$
  and $\defoc{\tr_c}$ can be obtained from $\tr$ by swapping independent actions.
  Since we have $\foc A\eintc \foc B$, we deduce that
  $\foc B \sintc{\tr_c} \foc{B'}$ with $\Phi(A') \statequiv \Phi(B')$.
  Lemma~\ref{lem:reach-sound} gives us $B\sinta{\defoc{\tr_c}}B'$.
  We can now apply Lemma~\ref{lem:perm} to obtain
  $B\sinta{\tr} B'$ and conclude.
\ignore{
  Now, by inspecting
  the transformation $C_A$ we shall recover $B \sint{\tr} B'$.  We
  distinguish two cases:
  \begin{itemize}

  \item $A$ is initial. If $\tr$ is empty, the result is obvious.
    Otherwise, $C_A(\tr) = \tr_{+} . C_A(\tr')$ where $\tr_{+}$
    corresponds to the positive phase extracted from the positive
    prefix of $\tr$. More precisely, $\deco{\tr_{+}} . \tr'$ is
    obtained from $\tr$ by permuting independent positive actions (inputs
    and session creations) in the positive prefix.  From $A \eintc B$
    we obtain $B'$ such that $\Phi(A')\estat\Phi(B')$ and
      $$\foc{B} \sintc{\tr_{+}} \foc{B_1} \sintc{C_A(\tr')} \foc{B'}.$$
      Additionally,  Lemma~\ref{lem:onestepc} implies that $A_1 \eintc B_1$,
      and so we can apply our induction hypothesis on $A_1$, $B_1$ and
      $\tr'$ to obtain $B_1 \sint{\tr'} B'$. 
      As we have noted before,
      labels ensure that the permutations performed to go from $\tr$ to
      $\deco{\tr_{+}}.\tr'$ can be undone for $B$ since they only concern
      independent actions.
      Thereby, Lemma~\ref{lem:perm} allows us to conclude $B\sint{\tr}B'$.

    \item $A$ is not initial. It can thus perform a negative step in
      the regular semantics, and the compressed semantics requires
      that one such step is taken immediately.  Since $A\sint{\tr}A'$ is complete,
      this negative step is performed at some point in the latter execution,
      and $C_A(\tr) = \alpha.C_A(\tr_1.\tr_2)$ where
      $\tr = \tr_1.\alpha.\tr_2$. Note that we have that $\alpha$
      is independent with all actions of $\tr_1$. 
      By $A\eintc B$ we obtain $B'$ such that $\Phi(A')\estat\Phi(B')$ and
      \[ \foc{B} \sintc{\alpha} \foc{B_1} \sintc{C_A(\tr_1.\tr_2)}
      \foc{B'}. \]
      Applying  Lemma~\ref{lem:onestepc}, we obtain $A_1 \eintc B_1$.
      So we have $B_1 \sint{\tr_1.\tr_2} B'$ by
      induction hypothesis, and thus
      \[ B \sint{\alpha} B_1 \sint{\tr_1.\tr_2} B' \]
      %
      %
      It remains to perform the inverse permutation involving $\alpha$
      to get back to $\tr$ (\ie between $\tr_1$ and $\tr_2$).
      Again, this is ensured by Lemma~\ref{lem:perm} and 
      the fact that $\alpha$ is independent with all actions of $\tr_1$.
  \end{itemize}
}
\end{proof}


\section{Reduction}
\label{sec:app:reduction}

\subsection{Reachability}

\restatelemma{lem:block-swap}

\begin{proof}
  Thanks to Lemma~\ref{lem:reach-sound}, we have that
  $\defoc{A}\sinta{\defoc\tr}\proc{\p}{\Phi}$.
  We first prove that $\tr'$ can be performed using $\sinta{}$.
  For this, it suffices to establish the implication for each of the two
  generators of $\eqtb{\Phi}$.
  The first case is given by Lemma~\ref{lem:perm}.
  The second one is a common property of (derivatives of)
  the applied $\pi$-calculus that follows from a simple observation
  of the transition rules.
  Finally, we must prove that $\tr'$ can be played using $\sintc{}$.
  Thanks  to initiality of $A$ and $\procc{\p}{\wfoc}{\Phi}$ we know that each
  block of $\tr$ starts when the configuration is initial and after performing it we get another initial configuration. 
  This is still true in $\defoc{A}\sinta{\defoc{\tr'}}\proc{\p}{\Phi}$.
  Finally, labels of blocks of $\tr'$ ensures that a single process is used in a positive
  part of any block.
  Having proving those two facts, we can easily show that each block of $\tr'$
  can be performed using $\sintc{}$.
\end{proof}

For the sake of readability of the following proofs, we now introduce some notations
(where $b_1,b_2$ are two blocks and $\Phi$ is a frame):
\begin{itemize}
\item we note $b_1\inpar^s b_2$ when $b_1$ and $b_2$ are
  sequentially independent (\ie for any $\alpha_1\in b_1$ and
  $\alpha_2\in b_2$, $\alpha_1$ and $\alpha_2$ are sequentially
  independent);
\item we note $b_1\inpar^d b_2$ when $b_1$ and $b_2$ are
  recipe independent (\ie for any $\alpha_1\in b_1$ and $\alpha_2\in
  b_2$, $\alpha_1$ and $\alpha_2$ are recipe independent);
\item  for two traces $\tr=b_1 \ldots b_n$ and
  $\tr'=b_1'\ldots b_m'$, we note $(\tr=_\E \tr')\Phi$ when $n=m$ and
  for all $i\in[1;n],$ $(b_i=_\E b_i')\Phi$.
\end{itemize}

\begin{remark}
  Let $A\sintc{\tr}A'$ be any compressed execution.
  If $b_1$ and $b_2$ are two blocks of $\tr$ and $a_1$ (resp. $a_2)$
  is the first labelled action of $b_1$ (resp. $b_2$) we have the following:
\[
b_1\inpar^s b_2 \iff a_1\text{ is sequentially independent with }a_2.
\]
This is implied by the fact that for any other action $a_1'$ of $b_1$, its label has
the label of $a_1$ as a prefix.
\end{remark}

\restatelemma{lem:min-swap}

\begin{proof}
Let $A$ and $\procc{\p}{\wfoc}{\Phi}$ be two configurations such that
$A \sintc{\tr} \procc{\p}{\wfoc}{\Phi}$. 

\medskip{}
$(\Rightarrow$)
We first show that if $\tr$ is $\Phi$-minimal, then $A \sintd{\tr}
\procc{\p}{\wfoc}{\Phi}$ by induction on the trace~$\tr$. The base
case, \ie $\tr = \epsilon$ is straightforward.
Now, assume that $\tr=\tr_0.b$ for some block $b$
and $A\sintc{\tr_0} \procc{\p_0}{\wfoc}{\Phi_0}\sintc{b}A'$.
Since $\tr$ is $\Phi$-minimal, we also have that $\tr_0$ is
$\Phi_0$-minimal and thus we obtain by induction hypothesis that
$A\sintd{\tr_0} \procc{\p_0}{\wfoc}{\Phi_0}$. To conclude, 
it remains to show that  $\constrd{b'}{\tr_0}$ for any $b'$ such that
$(b'=_\E b)\Phi_0$. Assume that it is not the case, this means that
for some $b'$ such that $(b=_\E b')\Phi_0$, the trace $\tr_0$ can
be written
$\tr'_0.b_0. \ldots.b_n$ with:
\begin{center}
$b_i \inpar b'$ and $b_i \prec b'$ for any $i>0$, as well as 
$b_0 \inpar b'$ and $b' \prec b_0$.
\end{center}
Let $\tr' = \tr'_0.b'.b_0 \ldots b_n$. We have
$\tr' \ordlex \tr$ and $\tr' \eqtb{\Phi} \tr_0.b'$, which
contradicts the $\Phi$-minimality of $\tr$.



$(\Leftarrow)$
Now, we assume that $\tr$ is not $\Phi$-minimal, and 
we want to establish that $\tr$ cannot be executed in the reduced
semantics.
Let $\tr_m$ be the $\Phi$-minimal trace of the equivalence class of $\tr$.
We have in particular $\tr_m\eqtb\Phi \tr$ and $\tr_m\ordlex\tr$.
Now, we let $\tr_m^s$ (resp. $\tr^s$) be the
``trace of labelled skeletons'' associated to $\tr_m$ (resp. $\tr$).
Let $\tr_0^s$ be the longest common prefix of $\tr_m^s$ and $\tr^s$
and $\tr_0$ (resp. $\tr_0'$) be the corresponding prefix of $\tr$ (resp. $\tr_m$).
We have a decomposition of the form
$\tr=\tr_0.b.\tr_1$ and $\tr_m=\tr_0'.b_m.\tr_1'$ with 
$(\tr_0=_\E\tr_0')\Phi$ and $b_m\prec b$.
Again, since when dropping recipes, the relation $\eqtb{\Phi}$ only swaps
sequentially independent labelled skeletons, block $b_m$ must have
a counterpart in $\tr$ and, more precisely, in $\tr_1$. We thus have 
a more precise decomposition of $\tr$:
$\tr=\tr_0.b.\tr_{11}.b_m'.\tr_{12}$ such that $(b_m'=_\E b_m)\Phi$.

We now show  that $b_m'$ cannot
be executed after $\tr_0.b.\tr_{11}$ in the reduced semantics
(assuming that the trace has been executed so far in
the reduced semantics). 
In other words,
we show that $\constrd{b_m'}{\tr_0.b.\tr_{11}}$ does \emph{not}
hold.
We have seen that $(b_m' =_\E b_m)\Phi$ and $b_m\prec b$ so it suffices to
show:
\begin{center}
$b_m \inpar b_i$ for any $b_i \in b.\tr_{11}$
\end{center}
First, we prove $b_i \inpar^d b_m$ (\ie they are recipe independent)
for any $b_i \in b.\tr_{11}$.
This comes from the fact that $\tr'_0.b_m.\tr'_1=\tr_m$ is plausible,
and thus the inputs of $b_m$ only use handles introduced in $\tr'_0$ which are the same as those
introduced in $\tr_0$. In particular, the inputs of $b_m$ do not rely
on the handles introduced in $b.\tr_{11}$.
Similarly, using the fact that $\tr_0.b.\tr_{11}.b_m'.\tr_{12}=\tr$ is plausible and
$b_m'^-=b_m^-$,
we deduce that handles of outputs of $b_m$ are not used in $b.\tr_{11}$.

Second, we show that $b_i \inpar^s b_m$ (\ie they are sequentially independent) 
for any $b_i \in b.\tr_{11}$.
For this, we remark that for any traces $\tr_1.b.\tr_2\eqtb{\Phi}\tr_1'.b'.\tr_2'$
such that $(b=_\E b')\Phi$, we have that $b\inpar^s b_s$ for all $b_s\in\skl(\tr_1')\backslash\skl(\tr_1)$ where
$\skl(\tr)$ is the multiset of labelled skeletons of blocks of $\tr$,
and $\backslash$ should be read as multiset subtraction.
This can be easily shown by induction on the relation $\eqtb{\Phi}$.
By applying this
helping remark to $\tr_0.b_m'.\tr_2'\eqtb{\Phi}\tr_0.b.\tr_{11}.b_m'.\tr_{12}$,
we obtain the required conclusion: $b_m'\inpar^s b.\tr_{11}$ and thus $b_m\inpar^s b.\tr_{11}$.
\end{proof}

\subsection{Equivalence}

\restatableproposition{prop:eqtb-sym}{
  For any static equivalent frames $\Phi\estat\Psi$ and compressed
  traces $\tr$ and $\tr'$,
  we have that $\tr \eqtb{\Phi} \tr'$ if, and only if,
  $\tr \eqtb{\Psi} \tr'$.
}
\begin{proof}
The two implications are symmetric, we thus only show one implication.
Considering the two generators of $\eqtb{\Phi}$, the only non-trivial
step is to show that $\tr . b_1 . \tr' \eqtb{\Psi} \tr . b_2 . \tr'$
when $(b_1 =_\E b_2)\Phi$. But the latter condition, together with
$\Phi\estat\Psi$, yields $(b_1 =_\E b_2)\Psi$ which allows us to
conclude.
\end{proof}

\begin{lemma}
\label{lem:completion-comp}
Let $A$ and $B$ be two action-deterministic
configurations. If for any complete execution of the form $A \sintc{\tr} \procc{\p}{\wfoc}{\Phi}$ 
with $\bc(\tr)\cap\fc(B)=\emptyset$,
there exists an execution $B \sintc{\tr} \procc{\q}{\wfoc}{\Psi}$ such that $\Phi
\estat \Psi$, then $A \sqsubseteq_c B$.
\end{lemma}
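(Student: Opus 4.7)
The plan is to adapt the proof of Lemma~\ref{lem:completion} to the compressed setting. Given an arbitrary execution $A \sintc{\tr_0} A_0$ with $\bc(\tr_0)\cap\fc(B)=\varnothing$, I aim to exhibit $B_0$ such that $B \sintc{\tr_0} B_0$ and $\Phi(A_0)\estat\Phi(B_0)$. Since $A \sqsubseteq_c B$ is the required conclusion, handling this single arbitrary execution suffices.

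First, I would extend $\tr_0$ to a complete compressed execution $A \sintc{\tr_0.\tr_1} A'$ ending in an unfocused state of the form $\procc{\p}{\wfoc}{\Phi}$, while arranging that $\bc(\tr_0.\tr_1)\cap\fc(B)=\varnothing$. Such an extension always exists: if $A_0$ is focused on a process, one finishes the current positive phase (by performing the available inputs using arbitrary recipes over $\dom(\Phi(A_0))$, then releasing as soon as the continuation becomes negative, or unfolding a replication and releasing on the resulting process); once an initial configuration is reached, one keeps scheduling new blocks on non-replicated processes until only replicated processes remain. The freshness requirement with respect to $\fc(B)$ is met by the usual $\alpha$-renaming of fresh channels and names in \textsc{Start/!} and \textsc{Repl}.

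Next, the hypothesis applied to this complete execution yields $B \sintc{\tr_0.\tr_1} B'$ with $\Phi(A')\estat\Phi(B')$. This execution decomposes as $B \sintc{\tr_0} B_0 \sintc{\tr_1} B'$ for a unique intermediate $B_0$. The $\Foc$ and $\Rel$ annotations inside $\tr_0$ determine exactly whether $B_0$ is focused and, if so, on which labelled process, so $B_0$ mirrors the focus structure of $A_0$.

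Finally, to establish $\Phi(A_0)\estat\Phi(B_0)$, I would argue as in the proof of Lemma~\ref{lem:completion}: since frames only grow along compressed executions, $\dom(\Phi(A_0))=\dom(\Phi(B_0))\subseteq\dom(\Phi(A'))=\dom(\Phi(B'))$, so any test $M\Phi=_\E N\Phi$ with $M,N\in\T(\dom(\Phi(A_0)))$ is equally a test over the larger domain, and failure of static equivalence at $(A_0,B_0)$ would contradict $\Phi(A')\estat\Phi(B')$. The main obstacle is therefore the completion step: one must verify that the compressed semantics never gets stuck in a way that prevents reaching an unfocused complete configuration, which relies on the fact that a focused positive phase can always be closed (either the focused process reduces to a negative continuation and \textsc{Release} applies, or it is a replicated process that can be unfolded), after which the usual negative-phase saturation of \textsc{Neg} and a greedy scheduling of remaining blocks finishes the trace.
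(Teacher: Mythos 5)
Your proof is correct and follows essentially the same route as the paper's own argument: complete the given execution to a complete compressed execution (with fresh bound channels w.r.t.\ $B$), apply the hypothesis to the completed trace, split $B$'s execution at the original prefix, and transfer static equivalence downward using the fact that frame domains only grow, exactly as in Lemma~\ref{lem:completion}. You are in fact slightly more careful than the paper, which starts from an unfocused ending configuration $\procc{\p'}{\wfoc}{\Phi'}$, whereas you also close off prefixes ending under focus --- just note that a focused replicated process is \emph{negative} in the paper's terminology, so the positive phase is closed by \textsc{Release} rather than by unfolding the replication (unfolding via \textsc{Start/!} is only available when unfocused), which does not affect your conclusion.
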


\begin{proof}
Let $A$ and $B$ be two action-deterministic
configurations, and assume that for any complete execution 
$A \sintc{\tr} \procc{\p}{\wfoc}{\Phi}$ with $\bc(\tr)\cap\fc(B)=\emptyset$,
there exists an execution $B \sintc{\tr} \procc{\q}{\wfoc}{\Psi}$
such that $\Phi
\estat \Psi$. Now, we have to establish that $A \sqsubseteq_c B$.

Let $\procc{\p'}{\wfoc}{\Phi'}$ be a configuration such that $A \sintc{\tr'} \procc{\p'}{\wfoc}{\Phi'}$. First,
we can complete this execution to reach a process $
\procc{\p}{\wfoc}{\Phi}$ such that each process $P \in \p$ is
replicated, \ie 
\begin{center}
$A \sintc{\tr'} \procc{\p'}{\wfoc}{\Phi'} \sintc{\tr^+}
\procc{\p}{\wfoc}{\Phi}$ is a complete execution.
\end{center}
Without loss of generality, we can choose $\tr^+$ so that it satisfies
$\bc(\tr^+)\cap\fc(B)=\emptyset$.
By hypothesis, we know that there exists an execution $B \sintc{\tr'
  \tr^+} \procc{\q}{\wfoc}{\Psi}$ such that $\Phi \estat \Psi$.
Let $B'$ be the configuration reached along this execution after the
execution of $\tr'$ and $\Psi'$ its frame. 
%
Similarly to the proof of Lemma~\ref{lem:completion}, we prove that
$\Phi\estat\Psi$ implies $\Phi'\estat\Psi'$.
\end{proof}

\restatetheorem{thm:redequiv}

\begin{proof}
We prove the two directions separately.
\medskip{}

\noindent
($\Rightarrow$) \emph{$A \sqsubseteq_c B$ implies $A \sqsubseteq_r B$.}
Consider an execution of the form $A \sintd{\tr}
\procc{\p}{\wfoc}{\Phi}$ with $\bc(\tr)\cap\fc(B)=\emptyset$.
Using Lemma~\ref{lem:reach-red-sound}, we know that $A \sintc{\tr}
\procc{\p}{\wfoc}{\Phi}$, and Lemma~\ref{lem:min-swap} tells us that
$\tr$ is $\Phi$-minimal. Since $A \sqsubseteq_c B$, we deduce that
there exists $\procc{\q}{\wfoc}{\Psi}$ such that:
\begin{center}
$B \sintc{\tr} \procc{\q}{\wfoc}{\Psi}$ and $\Phi \estat \Psi$.
\end{center}
Now, by Proposition~\ref{prop:eqtb-sym} we obtain that $\tr$
is also $\Psi$-minimal, and so Lemma~\ref{lem:min-swap} tells us
that the execution of $\tr$ by $B$ can also be performed in
the reduced semantics.

\medskip{}

\noindent
($\Leftarrow$) \emph{$A \sqsubseteq_r B$ implies $A \sqsubseteq_c B$.}
Relying on Lemma~\ref{lem:completion-comp}, it is actually sufficient
to show that for any complete execution 
$A \sintc{\tr} \procc{\p}{\wfoc}{\Phi}=A'$
with $\bc(\tr)\cap\fc(B)=\emptyset$, there
exists an execution of the form $B \sintc{\tr}
\procc{\q}{\wfoc}{\Psi}$ such that $\Phi \estat \Psi$.
Note that since the given execution is complete, we have that $A'$ is initial.
Let $\tr'$ be a $\Phi$-minimal trace in the equivalence class of $\tr$.
We have that $A$ executes $\tr'$ in the reduced semantics, and so for
some $B'$ we have
\[ B \sintd{\tr'} B' \mbox{~ and ~} \Phi(B') \estat \Phi. \]
Using Lemma~\ref{lem:reach-red-sound}, we obtain the same execution
in the compressed semantics.
Finally, by Proposition~\ref{prop:eqtb-sym} we obtain
$\tr' \eqtb{\Phi(B')} \tr$, and by Lemma~\ref{lem:block-swap} we
obtain:
\[ B \sintc{\tr} B' \mbox{~ and ~} \Phi(B') \estat \Phi \]
\end{proof}

\section{Optimization for improper blocks}
\label{sec:impropre}

Between any two initial configurations, the compressed as well as the reduced semantics execute a sequence of actions of the form 
 of the form
\;$\Foc(\alpha).\tr^+.\Rel.\tr^-$\;
where~$\tr^+$ is a (possibly empty) 
sequence of input actions, whereas $\tr^-$ is a (non-empty) sequence
of $\Out$, $\Para$, and $\Zero$ actions. 
Such sequences are called \emph{blocks}. Now, we also make a
distinction between blocks having a null negative phase (\ie
$\tr^- = \Zero$), and the
others. The former are called \emph{improper blocks} whereas the latter are
called \emph{proper blocks}.
Finally, we say that a trace is \emph{proper} if it contains at most one
improper block and only at the end of the trace.

The idea is that, when checking trace equivalence, we do not have to consider
all possible traces but we can actually restrict to proper ones.
We present below an improved version of the notions of compressed and
reduced equivalence.


\subsection{Compression}
\label{subsec:compression}

\begin{definition}
Let~$A$ and~$B$ be two configurations. We have that $A \sqsubseteq_{c+i} B$ if,
for every $A'$ such that $A \sintc{\tr} A'$ and
$\bc(\tr)\cap\fc(B)=\emptyset$ for some \emph{proper} $\tr$,
there exists $B'$ such that $B\sintc{\tr} B'$ with $\Phi(A')\statequiv \Phi(B')$.
We write $A \approx_{c+i} B$, if 
$A \sqsubseteq_{c+i} B$ and $B \sqsubseteq_{c+i} A$.
\label{def:eintc-impropre}
\end{definition}

Operationally, we can obtain $\approx_{c+i}$ by adding
a case to the {\sc Release} rule (and constraining the former rule
to not apply in that case):
\[
\begin{array}{lc}
  \mbox{\sc Release$_i$} &  \infer[]
  {\trip{\mathcal{P}}{\loc{0 }{\ell}}{\Phi}\fsintc{\;\loc{\Rel}{\ell}\;} 
    \trip{\emptyset}{\wfoc}{\Phi}}
  \\
\end{array}
\]
This rule discards exactly the traces containing an improper block
that is not at the end: note that having $0$ under focus implies
that the negative part of the block will be restricted to $\Zero$.
This is because no negative 
actions was available before performing this block and consequently there can
only be a $\Zero$ in the negative part of the block.

\begin{proposition}
Let $A$ and $B$ be two initial action-deterministic configurations.
\begin{center}
$A \eintc B$ if, and only, if, $A \approx_{c+i} B$
\end{center}
\end{proposition}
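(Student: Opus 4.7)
The forward implication is immediate: proper traces form a subset of all compressed traces, so any witness of $A \eintc B$ restricts to a witness of $A \approx_{c+i} B$. For the converse, the plan is to show $A \sqsubseteq_c B$ (the reverse inclusion follows by symmetry) by induction on the number $k$ of improper blocks in the trace $\tr$ that do not occupy its last position. The base case $k=0$ makes $\tr$ proper by definition, so $A \approx_{c+i} B$ applies directly.

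For the inductive step, let $b$ be the first improper block of $\tr$ not at its end and decompose $\tr = \tr_1 . b . \tr_2$, where $\tr_1$ contains no improper block and $\tr_2$ is non-empty. The prefix $\tr_1 . b$ is proper, so $A \approx_{c+i} B$ yields $B \sintc{\tr_1 . b} B^*$ with $\Phi(A^*) \statequiv \Phi(B^*)$, where $A \sintc{\tr_1 . b} A^*$. The crux is to establish the subclaim that $A^* \approx_{c+i} B^*$: once this is shown, the induction hypothesis applied to the pair $(A^*, B^*)$ and to $\tr_2$, which carries one fewer misplaced improper block, produces the required $B^* \sintc{\tr_2} B'$ with $\Phi(A') \statequiv \Phi(B')$, and concatenation yields $B \sintc{\tr} B'$.

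The main obstacle is the subclaim, and its proof rests on the observation that an improper block $b$ has an exceptionally limited effect: its positive phase performs only inputs on a single process $P$ identified by some label $\ell$, and the negative phase merely discards the resulting $0$. Hence $b$ leaves the frame untouched and introduces no label outside the $\ell$-family, which is itself confined to $P$. Given any proper trace $\tr_0$ executable from $A^*$, the concatenation $\tr_1 . \tr_0$ is again proper, and $A$ can execute it simply by leaving $P$ alone, reaching $\tilde A^{**} = A^{**} \uplus \{P\}$ whose frame equals that of $A^{**}$. Applying $A \approx_{c+i} B$ to $\tr_1 . \tr_0$ gives $B \sintc{\tr_1 . \tr_0} \tilde B^{**}$ with $\Phi(\tilde A^{**}) \statequiv \Phi(\tilde B^{**})$; by action-determinism, this $B$-execution passes through the same intermediate configuration $\tilde B$ (after $\tr_1$) as the one producing $B^*$, so $\tilde B \sintc{\tr_0} \tilde B^{**}$. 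Since $\tr_0$ is executable from $A^*$, in which the label $\ell$ is already absent, none of its actions can reference a label in the $\ell$-family, and action-determinism transfers this observation to $B$. Consequently the $\tr_0$-step from $\tilde B$ never touches the process matching $\ell$, and restricts cleanly to an execution $B^* \sintc{\tr_0} B^{**}$ obtained by deleting that inert process, with $\Phi(B^{**}) = \Phi(\tilde B^{**}) \statequiv \Phi(\tilde A^{**}) = \Phi(A^{**})$, as required. The symmetric inclusion $B^* \sqsubseteq_{c+i} A^*$ follows by the same reasoning, and freshness conditions on bound channels transfer routinely through the decompositions since every channel is bound at most once in $\tr$.
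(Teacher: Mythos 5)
Your proof takes a genuinely different route from the paper's. The paper handles the non-trivial direction globally: it asserts that the improper blocks $b_1,\ldots,b_k$ of $\tr$ are pairwise independent and can be commuted past all later blocks, rewrites $\tr$ into $\tr'.b_1.\ldots.b_k$ with $\tr'$ made of proper blocks, invokes the hypothesis on the proper traces $\tr'$ and $\tr'.b_i$ separately, and then recombines and un-permutes on the $B$ side. You instead induct on the number of misplaced improper blocks, cut the trace at the first one, and prove a congruence-style subclaim $A^*\approx_{c+i}B^*$; this localizes the combinatorics and replaces the block-commutation machinery (Lemma~\ref{lem:block-swap}) by an ``inert process'' argument. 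The skeleton of your induction is sound: initiality and action-determinism are preserved across blocks, a labelled trace determines an execution uniquely in well-labelled configurations, and your freshness bookkeeping is at the level of rigor of the paper's own proofs.

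The gap is in the proof of the subclaim: your description of improper blocks is incomplete. An improper block need not start with $\Foc(\In(c,M))$; it may have the form $\loc{\Foc(\Ses(a,\vect{c}))}{\ell}.\tr^+.\Rel.\Zero$, obtained by rule \textsc{Start\slash !} when the fresh copy of a replicated process $\bang{a}{\vect c, \vect n}{P}$ performs some inputs and reduces to $0$ (take $P=\In(c,x).0$). Such a block does leave the frame unchanged, but it does not merely kill a process: following the labelling of the annotated semantics, the surviving replicated process is relabelled from $\ell$ to $\ell\cdot 2$. Hence labels in the $\ell$-family remain live after the block: a trace $\tr_0$ executable from $A^*$ may contain actions labelled in the $\ell\cdot 2$-family (later unfoldings of the same replication), so it is false that ``none of its actions can reference a label in the $\ell$-family'', and $A$ cannot execute $\tr_1.\tr_0$ ``by leaving $P$ alone'': in $\tilde A$ the replication still carries label $\ell$, so its actions would be labelled $\ell,\ell\cdot 1,\ldots$ rather than $\ell\cdot 2,\ell\cdot 2\cdot 1,\ldots$ as $\tr_0$ requires. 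Repairing this needs either an explicit label-translation argument relating executions of $A^*$ to executions of $\tilde A$, or a separate treatment of session-initiated improper blocks. For what it is worth, the paper's own proof has exactly the same blind spot: two session-initiated improper blocks arising from the same replication carry labels $\ell$ and $\ell\cdot 2$ and are therefore sequentially dependent, contradicting the paper's claims that the $b_i$ are pairwise independent and can be swapped to the end of $\tr$. So your argument is not weaker than the published one, but as written neither covers this case.
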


\begin{proof}
The $(\Rightarrow)$  direction is trivial. We focus on the other one.
Assume that $A \sqsubseteq_{c+i} B$. Let $A'$ be such
that $A \sintc{\tr} A'$ for some $\tr$ such that
$\bc(\tr)\cap\fc(B)=\emptyset$.
Let $b_1, \ldots, b_k$ be the improper blocks that occur in $\tr$. We
have that $b_1,\ldots, b_k$ are pairwise independent. Moreover, 
we have that there exists~$\tr'$ made of proper blocks such that
$\tr'.b_1.\ldots.b_k$ is obtained from $\tr$ by swapping independent
blocks, and thus
we have that $A \sintc{\tr'.b_1.\ldots.b_k} A'$.
There exist initial configurations
$A_0, A_1,\ldots,A_k$ such that
$A \sintc{\tr'} A_0$, and 
\[
A_0 \sintc{b_1} A_1, \; A_0 \sintc{b_2} A_2, \; \ldots, A_0 \sintc{b_k} A_k.
\]
 Thanks to our hypothesis, we deduce that there exist configurations $B_0$, $B_1,
 \ldots, B_k$ such that
$B \sintc{\tr'} B_0$ with $\Phi(A_0) \statequiv \Phi(B_0)$, and 
\[
B_0 \sintc{b_1} B_1, \; B_0 \sintc{b_2} B_2, \; \ldots, B_0 \sintc{b_k} B_k.
\]
We have also that $\Phi(A_0) \statequiv \Phi(B_0)$.
 Since blocks $b_1,\ldots,b_k$ are pairwise independent, we deduce that
 there exist $B'$ such that $B \sintc{\tr'} B_0 \sintc{b_1,\ldots,b_k}
 B'$ with $\Phi(B') = \Phi(B_0)$.
Then, permutations of blocks can be undone to retrieve $\tr$ (since
swapping have been done between independent blocks).
\end{proof}


\subsection{Reduction}
\label{subsec:reduction}

\begin{definition}
Let~$A$ and~$B$ be two configurations. We have that $A \sqsubseteq_{r+i} B$ if,
for every $A'$ such that $A \sintd{\tr} A'$ and
$\bc(\tr)\cap\fc(B)=\emptyset$ for some \emph{proper} $\tr$,
there exists $B'$ such that $B\sintd{\tr} B'$ with $\Phi(A')\statequiv
\Phi(B')$.
We write $A \approx_{r+i} B$, if 
$A \sqsubseteq_{r+i} B$ and $B \sqsubseteq_{r+i} A$.
\label{def:eintd-impropre}
\end{definition}

\begin{proposition}
Let $A$ and $B$ be two initial action-deterministic configurations.
\begin{center}
$A \eintd B$ if, and only, if, $A \approx_{r+i} B$
\end{center}
\end{proposition}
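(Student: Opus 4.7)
The forward direction is immediate: $\approx_{r+i}$ quantifies over a subclass of reduced executions (the proper ones), so $A \eintd B$ trivially implies $A \approx_{r+i} B$.

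For the converse, my plan is to leverage the previously established results to reduce the goal. By Theorem~\ref{thm:redequiv} combined with the preceding proposition on improper compression, one has $A \eintd B \iff A \eintc B \iff A \approx_{c+i} B$, so it suffices to prove $A \approx_{r+i} B \Rightarrow A \approx_{c+i} B$. By symmetry it is enough to show $A \sqsubseteq_{c+i} B$: given $A \sintc{\tr} A'$ with $\tr$ proper and $\bc(\tr) \cap \fc(B) = \emptyset$, I construct a matching $B \sintc{\tr} B'$ with $\Phi(A') \statequiv \Phi(B')$. Write $\tr = \tr_p$ if $\tr$ contains no improper block, or $\tr = \tr_p.b$ with $b$ the trailing improper block; in either case let $A_0$ be the initial configuration satisfying $A \sintc{\tr_p} A_0$, and note that $\Phi(A_0) = \Phi(A')$.

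To match the proper prefix $\tr_p$, I take $\tr'_m$ to be a $\Phi(A_0)$-minimal representative of $\tr_p$ in its $\eqtb{\Phi(A_0)}$-equivalence class. Since $\tr_p$ contains no improper block and the generators of $\eqtb{\Phi}$ (independent-block swaps and equal-modulo-$\E$ block substitutions) preserve the improper/proper status of each block, $\tr'_m$ is proper. Lemma~\ref{lem:block-swap} gives $A \sintc{\tr'_m} A_0$, and Lemma~\ref{lem:min-swap} then yields $A \sintd{\tr'_m} A_0$. Applying $A \approx_{r+i} B$ produces $B \sintd{\tr'_m} B_0$ with $\Phi(A_0) \statequiv \Phi(B_0)$. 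Combining Lemma~\ref{lem:reach-red-sound}, Proposition~\ref{prop:eqtb-sym}, and Lemma~\ref{lem:block-swap}, I unwind this to $B \sintc{\tr_p} B_0$. When $\tr = \tr_p$, this completes the construction.

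When $\tr = \tr_p.b$, it remains to extend $B_0$'s execution by an improper block $b'$ with $(b' =_\E b)\Phi(B_0)$. Since $b$ has empty negative phase the frames are unchanged, $\Phi(B_0) \statequiv \Phi(A_0) = \Phi(A')$, so the frame equivalence is automatic once the execution is in place. The executability of $b'$ from $B_0$ follows from a strong-symmetry argument akin to Lemma~\ref{lem:strong-symmetry}: along the matched reduced executions of $\tr'_m$, labelled skeletons remain synchronised (an induction on blocks exploiting action-determinism and the static equivalence of frames, which together ensure that conditional branches agree on both sides), so the process focused in $b$ has a labelled counterpart in $B_0$ able to perform the same positive-phase inputs with the same recipes and terminate in $0$. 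The main obstacle I expect is precisely this skeleton-matching induction at the block level, which requires adapting Lemma~\ref{lem:strong-symmetry} from the annotated semantics to matched reduced executions of proper traces; this adaptation is routine but tedious, as it must propagate $\skl$-equality and static equivalence past each compressed block, relying on action-determinism to match the focused process uniquely on each side.
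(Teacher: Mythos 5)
Your high-level route is legitimate and genuinely different from the paper's: the paper proves $A \sqsubseteq_{r+i} B \Rightarrow A \sqsubseteq_r B$ directly, whereas you go through Theorem~\ref{thm:redequiv} and the compression analogue to reduce the goal to $A \approx_{r+i} B \Rightarrow A \approx_{c+i} B$. Your treatment of the proper prefix $\tr_p$ is also correct (pass to a $\Phi(A_0)$-minimal representative, which is still proper; use Lemmas~\ref{lem:block-swap} and~\ref{lem:min-swap} to turn it into a reduced execution; apply the hypothesis; unwind with Lemma~\ref{lem:reach-red-sound}, Proposition~\ref{prop:eqtb-sym} and Lemma~\ref{lem:block-swap}). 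The gap is the step you yourself flag as ``routine but tedious'': extending $B \sintc{\tr_p} B_0$ by the trailing improper block $b$. That step is not routine; as sketched, it fails.

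There are two problems. First, Lemma~\ref{lem:strong-symmetry} has $A \eint B$ as hypothesis, which via Theorem~\ref{thm:eintc} and Theorem~\ref{thm:redequiv} is essentially what you are in the middle of proving, so invoking it (or ``adapting'' its proof, which relies on Proposition~\ref{prop:det-eint} and hence on full trace equivalence) is circular. Second, the induction you sketch cannot be carried out from what you actually have: knowing that $B$ executes $\tr'_m$ with statically equivalent frames constrains only the processes of $B$ that $\tr'_m$ exercises, and says nothing about whether $B_0$ still contains a process, with the right label, that can perform the inputs of $b$ and then release a $0$. In the paper, skeleton equality after a trace is never obtained by induction along an already-matched execution; it is always extracted by applying the equivalence hypothesis to \emph{extensions} of that trace (this is exactly how Proposition~\ref{prop:det-eint} is used). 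Also, static equivalence does not make ``conditional branches agree'': internal tests involve private names, not recipes, so they are not governed by $\statequiv$. The missing idea is therefore to obtain executability of the improper block from the hypothesis itself, by applying $A \approx_{r+i} B$ to a proper reduced trace that \emph{ends} with (a variant of) $b$: take the $\Phi(A')$-minimal representative $\tr_m$ of the whole trace $\tr$, write $\tr_m = \tr_1.b''.\tr_2$ with $(b'' =_\E b)\Phi(A')$, note that $\tr_1.b''$ is a prefix of the reduced execution $A \sintd{\tr_m} A'$ and is a proper trace, that $\tr_1.\tr_2$ is likewise reduced-executable and proper (improper blocks can be dropped from prefixes without violating the authorised predicate, since they have no outputs and their focused process dies), apply the hypothesis to both, and glue the two executions of $B$ using action-determinism (uniqueness of the execution of $\tr_1$) and the independence of $b''$ from the blocks of $\tr_2$. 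This gluing is precisely the content of the paper's proof, i.e., your missing step is not a technicality but the heart of the argument.
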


\begin{proof}
The  ($\Rightarrow$) direction is trivial.
We focus on the other one.
Assume that $A \sqsubseteq_{r+i} B$. Let $A'$ be such that $A
\sintd{\tr} A'$ for some $\tr$ such that
$\bc(\tr)\cap\fc(B)=\emptyset$.
Lemma~\ref{lem:reach-red-sound} tells us that $A
\sintc{\tr} A'$, and thanks to Lemma~\ref{lem:min-swap}, we have that
$\tr$ is $\Phi(A')$-minimal.

Let $b_1, \ldots, b_k$ be the improper blocks that occur in $\tr$. 
We have that there exist $\tr_0, \tr_1, \ldots,\tr_k$ made of proper blocks such that
$\tr = \tr_0.b_1.\tr_1.b_2.\ldots\tr_{k-1}.b_k.\tr_k$.  We
have that $b_1,\ldots, b_k$ are pairwise independent, and also:
\[ \tr \eqtb{\Phi(A')} \tr_0.\tr_1\ldots \tr_k. b_1.\ldots.b_k \]

Because there are no dependencies between
$b_i$ and $\tr_j$ for $i<j$, and because the $b_i$ do not have any output,
we have that 
$A \sintd{\tr_0.\tr_1.\ldots.\tr_k} A_0$, and also that:
\begin{center}
$A \sintd{\tr_0.b_1} A_1$, 
$A \sintd{\tr_0.\tr_1.b_2} A_2$,
$\ldots$,
$A \sintd{\tr_0.\tr_1. \ldots. \tr_{k-1}.b_k} A_k$.
\end{center}
Thanks to our hypothesis, we deduce that there exist $B_0, B_1,
 \ldots, B_k$ such that
$B \sintd{\tr_0.\tr_1.\ldots.\tr_k} B_0$ with $\Phi(A_0) \statequiv \Phi(B_0)$, and also that:
\begin{center}
$B \sintd{\tr_0.b_1} B_1$, 
$B \sintd{\tr_0.\tr_1.b_2} B_2$,
$\ldots$,
$B \sintd{\tr_0.\tr_1. \ldots. \tr_{k-1}.b_k} B_k$.
\end{center}

We deduce that there exists $B'$ such that 
$B
\sintc{\tr_0.\tr_1\ldots.\tr_k.b_1. \ldots. b_k} B'$.
Next, we observe that $\Phi(A') = \Phi(A_0) \statequiv \Phi(B_0) = \Phi(B')$.
From this we conclude
$\tr \eqtb{\Phi(B')} \tr_0.\tr_1\ldots \tr_k. b_1.\ldots.b_k$,
hence $B \sintc{\tr} B'$.
Since $\tr$ is $\Phi(A')$-minimal 
it is also $\Phi(B')$-minimal, and thus $B \sintd{\tr} B'$
by Lemma~\ref{lem:min-swap}.
\end{proof}

\end{document}